\documentclass{article}

\usepackage{microtype}
\usepackage{graphicx}
\usepackage{subcaption}
\usepackage{booktabs} 

\usepackage{hyperref}

\usepackage[preprint]{icml2026}

\usepackage{amsmath}
\usepackage{amssymb}
\usepackage{mathtools}
\usepackage{amsthm}
\usepackage{comment}
\newcommand{\G}{\mathcal{G}}
\usepackage{dsfont} 

\newcommand{\E}{\mathbb{E}}
\newcommand{\R}{\mathbb{R}}
\DeclareMathOperator{\Var}{Var}

\usepackage[capitalize,noabbrev]{cleveref}

\theoremstyle{plain}
\newtheorem{theorem}{Theorem}[section]

\newtheorem{lemma}[theorem]{Lemma}
\newtheorem{corollary}[theorem]{Corollary}
\theoremstyle{definition}

\newtheorem{assumption}[theorem]{Assumption}
\theoremstyle{remark}
\newtheorem{remark}[theorem]{Remark}
\allowdisplaybreaks
\usepackage[textsize=tiny]{todonotes}

\icmltitlerunning{Individualized Causal Effects under Network Interference with Combinatorial Treatments}

\begin{document}

\twocolumn[
  \icmltitle{Individualized Causal Effects under Network Interference with Combinatorial Treatments}



  \icmlsetsymbol{equal}{*}

  \begin{icmlauthorlist}
    \icmlauthor{Yunping Lu}{yyy}
    \icmlauthor{Haoang Chi}{comp}
    \icmlauthor{Qirui Hu}{sch}
    \icmlauthor{Zhiheng Zhang}{sch}
   
  \end{icmlauthorlist}

  \icmlaffiliation{yyy}{University of Leeds}
  \icmlaffiliation{comp}{National University of Defense Technology}
  \icmlaffiliation{sch}{School of Statistics and Data Science, SUFE}

  \icmlcorrespondingauthor{Zhiheng Zhang}{zhangzhiheng@mail.shufe.edu.cn}

  \icmlkeywords{Machine Learning, ICML}

  \vskip 0.3in
]



\printAffiliationsAndNotice{}  

\begin{abstract}
Modern causal decision-making increasingly demands individualized treatment-effect estimation in networks where interventions are high-dimensional, combinatorial vectors. While network interference, effect heterogeneity, and multi-dimensional treatments have been studied separately, their intersection yields an exponentially large intervention space that makes standard identification tools and low-dimensional exposure mappings untenable. We bridge this gap with a unified framework that constructs a \emph{global potential-outcome emulator} for unit-level inference. Our method combines (1) rooted network configurations to leverage local smoothness, (2) doubly robust orthogonalization to mitigate confounding from network position and covariates, and (3) sparse spectral learning to efficiently estimate response surfaces over the $2^p$-dimensional treatment space. We also decompose networked effects into own-treatment, structural, and interaction components, and provide finite-sample error bounds and asymptotic consistency guarantees. Overall, we show that individualized causal inference remains feasible in high-dimensional networked settings without collapsing the intervention space.
\end{abstract}

\section{Introduction}
Causal inference in networked systems has become a central problem across social sciences, epidemiology, and online experimentation. In many applications, units are embedded in a network, and interventions are high-dimensional \emph{treatment slates}—vectors of features that can be simultaneously toggled. Examples include vaccination campaigns, peer effects in education, and product changes on digital platforms. In these settings, the classical no-interference assumption (SUTVA) is often violated: one unit's outcome may depend on others' treatment assignments, making the causal estimand inherently \emph{global} in the assignment vector \citep{hudgens2008toward, aronow2017estimating, savje2021average, leung2022approximate}.

At the same time, policy decisions are increasingly individualized: practitioners seek \emph{unit-level} or \emph{conditional} effects to guide personalized targeting and adaptive experimentation \citep{chernozhukov2018double, kunzel2019metalearners, wager2019grf}. Reconciling interference, heterogeneity, and multi-dimensional interventions is therefore crucial for credible causal decision-making in networked environments.

This paper integrates three literatures that have advanced in parallel. First, the interference literature has developed estimands and randomization-based methods for average direct and spillover effects \citep{hudgens2008toward, aronow2017estimating, savje2021average}. Second, the heterogeneous treatment effect (HTE) literature provides tools for learning individualized effects under no interference, ranging from machine learning to meta-learners \citep{chernozhukov2018double, kunzel2019metalearners, wager2019grf}, extended to interference settings \citep{ma2022learning, agarwal2022network}. Third, factorial and multi-treatment causal inference has clarified potential-outcome formulations for multi-factor interventions and randomization-based inference \citep{dasgupta2015factorial, lopez2017multiple, zhao2021factorial, agarwal2023synthetic}.

These three strands differ in their \emph{primary bottleneck}: interference requires modeling dependence between units’ assignments, HTE demands controlling confounding while preserving heterogeneity, and multi-dimensional treatments face another conceptual exponentially large action space. Real-world deployments often encounter all three challenges simultaneously. Crucially, this is not a simple additive problem in which interference, heterogeneity, and multi-dimensional treatments can be handled separately. Their interaction creates endogenous dependence structures and configuration-specific overlap conditions, under which standard identification and estimation arguments from each individual literature no longer apply. Consider a large online platform running a networked experiment. Each user $i\in[N]$ is assigned a $p$-dimensional binary feature vector (a `slate'') $T_i\in\{-1,1\}^p$, and the platform wishes to answer individualized questions of the form: \emph{`For this user, what is the outcome change if we toggle a particular subset of features, under the user's current local network environment?''} In principle, even without interference, this requires learning a response surface over $2^p$ treatment combinations; with interference, the relevant counterfactual depends on neighbors' assignments and the local network configuration, producing a space of counterfactuals that is exponentially large in both $p$ and network size. Existing approaches fail in a common, structural way.

This paper introduces a framework that transforms this seemingly intractable problem into one amenable to high-dimensional inference. Our approach combines: (i) graph localization to represent interference environments with rooted network configurations, exploiting smoothness in configuration space; (ii) doubly-robust (orthogonal) residualization to mitigate confounding from both network position and covariates; and (iii) sparse spectral learning for high-dimensional combinatorial treatments, enabling stable estimation and extrapolation. These components together yield a global potential-outcome emulator that produces individualized causal contrasts for arbitrary assignments.

A key advantage of the framework is its principled decomposition of individualized causal effects into: (a) own-treatment effects, (b) structural effects from changes in the network configuration, and (c) joint effects that combine both. This decomposition clarifies which components of heterogeneity are identifiable from network data and aligns with practical interventions such as feature toggles and network redesigns. We show that, under local regularity and sparse spectral structure, individualized causal comparisons remain feasible even when the counterfactual space grows exponentially with both the treatment dimension and network size.

The main contributions of this paper are as follows:
\begin{itemize}\setlength\itemsep{2pt}
    \item We formulate a potential-outcome framework for \emph{individualized} causal effects under network interference with \emph{combinatorial} treatments, with a clear decomposition into own-treatment, structural, and joint effects.
    \item We develop an estimator that integrates graph-configuration localization, doubly robust orthogonalization, and sparse spectral learning to recover unit-level response functions and construct a global potential-outcome emulator.
    \item We establish both finite-sample error bounds and asymptotic guarantees under localized dependence and overlap, and characterize robustness and partial identification when some treatment directions are not locally identifiable.
\end{itemize}

\section{Problem Formalization}
\label{sec:model}

Let $\mathbf{t}=(t_1,\ldots,t_N)$ denote a \emph{global} assignment, where each unit $i$ receives a $p$-dimensional binary treatment slate $t_i\in\{-1,1\}^p$.
For each $\mathbf{t}$, let $Y_i(\mathbf{t})$ denote the potential outcome of unit $i$, and denote $Y_i=Y_i(\mathbf{T})$ as the observed outcome under actual assignment $\mathbf{T}$.
We observe a single assignment--outcome snapshot on a networked population,
\begin{align}
    \bigl\{(Y_i,\,T_i,\,X_i,\,\mathcal{N}_i): i\in[N]\bigr\},
\end{align}
where $X_i$ are pre-treatment covariates and $\mathcal{N}_i$ encodes network neighborhood information.

Our goal is to construct an estimator $\widehat{\mathbf{Y}}(\mathbf{t})=(\widehat Y_1(\mathbf{t}),\ldots,\widehat Y_N(\mathbf{t}))$ that approximates the vector of potential outcomes under arbitrary global assignments.
This problem is well defined but intrinsically challenging: the assignment space grows exponentially in $p$ and $N$, and interference induces dependence across units through the network.

Beyond global predictions, we focus on \emph{individualized} causal inference.
For each unit, we target three families of causal contrasts: \emph{own-treatment} effects that vary the unit’s own slate holding the local interference environment fixed, \emph{structural} effects that vary the local network configuration holding the slate fixed, and \emph{joint} effects that vary both.
For user-specified contrasts, we provide point estimates with uncertainty quantification, while retaining the ability to evaluate $\widehat{\mathbf{Y}}(\mathbf{t})$ for arbitrary $\mathbf{t}$.

Following \citet{auerbach2021local}, we summarize the interference environment of unit $i$ by a rooted \emph{network configuration} $G_i(\mathbf{t})$, constructed from an ego-centered neighborhood of $i$ (e.g., within a fixed graph radius) together with vertex marks that include treatment slates (and, if desired, additional marks such as discretized covariates).
We write $G_i := G_i(\mathbf{T})$ for the realized random configuration and $g_i$ for its realization.
Let $\mathcal{G}$ denote the space of rooted, marked configurations (up to root-preserving isomorphism).

We equip $\mathcal{G}$ with a truncated rooted-graph distance (as in the local approach of \citet{auerbach2021local}).
Let $B_r(g)$ denote the radius-$r$ rooted ball around the root of $g$ (induced subgraph with marks carried along).
When comparing $g$ and $g'$, we first check whether $B_r(g)$ and $B_r(g')$ are root-isomorphic \emph{as unmarked graphs}; among root-preserving isomorphisms, we then measure mark mismatches.
Concretely, let $\tau_g(v)\in\{-1,1\}^p$ denote the treatment-slate mark at vertex $v$ in configuration $g$, and define $\Delta_r(g,g') := \displaystyle \min_{\phi:\,B_r(g)\simeq B_r(g')}
{\sum_{v\in V(B_r(g))}
\mathbb{I}\bigl\{\tau_g(v)\neq \tau_{g'}(\phi(v))\bigr\}}/|V(B_r(g))|$ if root-isomorphic (otherwise equals to $1$)
We then truncate at a small radius $R\in \mathbb{Z}^{+}$ and set
\begin{equation}\label{eq:d_R}
d_R(g,g') := \sum_{r=0}^{R} 2^{-(r+1)}\,\Delta_r(g,g').
\end{equation}
The metric $d_R$ induces a notion of ``local similarity'' between interference environments and enables nonparametric localization over $\mathcal{G}$.

\paragraph{Illustrative example.}
To illustrate the rooted-graph distance and the role of the matching $\phi$, consider a social network where each unit receives a binary treatment slate and outcomes may depend on neighbors’ assignments. Let $R=1$, thus each configuration is the root and its immediate neighbors. Suppose the root has exactly two neighbors, and consider two configurations $g$ and $g'$ whose unmarked ego networks are identical: in both, the root is connected to two neighbors. Hence $B_1(g)$ and $B_1(g')$ are root-isomorphic as unmarked graphs. A root-preserving isomorphism $\phi$ is then a bijection between the neighbor sets of $g$ and $g'$ that fixes the root. When multiple such bijections exist, $\Delta_1(g,g')$ uses the one that minimizes treatment-slate mismatches. In $g$, the two neighbors receive slates $(+1,-1)$ and $(-1,+1)$; in $g'$, they receive $(+1,+1)$ and $(-1,+1)$. Under the optimal matching, one neighbor is paired with an identical slate, while the other is paired with a different slate, so there is exactly one mismatch and $\Delta_1(g,g')=1/2$. If instead $g'$ had a different local structure—e.g., a different number of neighbors—then $B_1(g)$ and $B_1(g')$ would not be root-isomorphic, no root-preserving matching would exist, and by definition $\Delta_1(g,g')=1$. The overall distance $d_R(g,g')$ then aggregates these discrepancies across radii with geometrically decaying weights, placing more emphasis on mismatches closer to the root, consistent with the idea that nearby interference environments matter most for the root’s outcome.

Also, to represent response surfaces over $\{-1,1\}^p$, we use Walsh characters.
For each subset $S\subseteq[p]$, define $Z_S(t)=\prod_{\ell\in S} t_\ell$ and let $Z(t)$ collect $\{Z_S(t):S\subseteq[p]\}$.
This provides a convenient orthonormal dictionary on the hypercube and supports sparse/near-sparse modeling of high-order interactions.

Given a target unit $i$, we localize around its realized configuration $g_i$ by assigning weights to sample units $j$:
\begin{equation}\label{eq:weights}
w^{(i)}_j
:= \frac{K_G\!\bigl(d_R(g_j,g_i)/b_G\bigr)}
{\sum_{k=1}^N K_G\!\bigl(d_R(g_k,g_i)/b_G\bigr)},
\qquad \sum_{j=1}^N w^{(i)}_j = 1,
\end{equation}
where $K_G$ is a compactly supported kernel (e.g., indicator or Epanechnikov) and $b_G>0$ is a bandwidth (or, equivalently, one may use a $k$NN radius).
We quantify the \emph{effective} local sample size via the Kish measure
$
n_{\mathrm{eff}}^{(i)} := {1}/{\sum_{j=1}^N \bigl(w^{(i)}_j\bigr)^2},
$
which controls the variance of localized averages under homoskedastic noise and will serve as the fundamental local sample-size parameter in our analysis.


The following assumptions make the target problem well-defined and render it statistically tractable; each is standard in nearby literatures and will be revisited in the identification and theory sections.

\begin{assumption}[Configuration sufficiency / local interference]\label{ass:local}
There exists a (possibly radius-truncated) configuration mapping $G_i(\mathbf{t})$ such that, for all $i$ and all global assignments $\mathbf{t}$, the potential outcome $Y_i(\mathbf{t})$ depends on $\mathbf{t}$ only through the unit's own slate $t_i$ and its rooted configuration $G_i(\mathbf{t})$ (and covariates $X_i$), and thus $Y_i(\mathbf{t}) \;=\; Y_i\!\bigl(t_i,\,G_i(\mathbf{t})\bigr)$.
\end{assumption}
Assumption~\ref{ass:local} reduces the global interference problem to a localized object in $\mathcal{G}$, enabling borrowing of information across units with similar environments \citep{auerbach2021local}. Sensitivity to the truncation radius $R$ (and to alternative configuration constructions) can be assessed empirically by checking stability of fitted effects as $R$ varies.
\emph{Relaxations.} If interference decays with graph distance, increasing $R$ yields a controlled bias--variance trade-off; our methods naturally accommodate such sensitivity analyses.

\begin{assumption}[Local regularity in configuration space]\label{ass:smooth}
For each fixed $(t,x)$, the function
$g \mapsto \mathbb{E}[Y(t,G)\mid G=g, X=x]$
is locally Lipschitz with respect to $d_R$.
Specifically, $\forall g_0\in\mathcal{G}$ there exists $L(g_0,x,t)>0$ s.t.
$\big|
\mathbb{E}[Y(t,G)\mid G=g, X=x]
-
\mathbb{E}[Y(t,G)\mid G=g_0, X=x]
\big|
\;\le\;
L(g_0,x,t)\, d_R(g,g_0)
$
for all $g$ in a neighborhood of $g_0$.
\end{assumption}

Assumption~\ref{ass:smooth} justifies kernel or $k$NN localization and ensures that localization bias vanishes as neighborhoods shrink. One may diagnose violations by examining residual stability as a function of $d_R(g_j,g_i)$ and by cross-validating localization bandwidths. Piecewise regularity can be handled by using covers/adaptive bandwidths; when smoothness fails in specific regions, our framework permits reporting localized uncertainty inflation or reverting to coarser, partially identified summaries.

\begin{assumption}[Local overlap / design richness]\label{ass:overlap}
Fix a target configuration--covariate pair $(g_0,x_0)$.
There exists a neighborhood $\mathcal{N}(g_0,x_0)$ and a constant $\kappa(g_0,x_0)>0$ such that, for all $(g,x)\in\mathcal{N}(g_0,x_0)$, the conditional covariance matrix of the orthogonalized treatment features satisfies
$
\lambda_{\min}\!\left(
\mathbb{E}\!\left[
\widetilde Z(T)\widetilde Z(T)^\top
\,\middle|\,
G=g,\; X=x
\right]
\right)
\;\ge\;
\kappa(g_0,x_0),
$
where $\widetilde Z(T)=Z(T)-\mathbb{E}[Z(T)\mid G,X]$ denotes the residualized treatment-feature vector.
\end{assumption}

Assumption~\ref{ass:overlap} prevents degeneracy (e.g., near-deterministic treatment assignment within a configuration), which would otherwise preclude individualized identification. Overlap can be assessed locally via effective sample sizes, propensity diagnostics, and conditioning of weighted design matrices. When local overlap fails in some directions, we will characterize robustness and partial identification behavior (e.g., reporting bounds or restricting to identifiable contrasts).


\begin{assumption}[Sparse/near-sparse structure over combinatorial treatments]
\label{ass:sparsity}
For each configuration--covariate pair $(g,x)$, let
$
f(t;g,x) := \mathbb{E}[Y(t,G)\mid G=g, X=x]
$
denote the conditional mean response as a function of the treatment slate.
There exists a coefficient vector $\alpha(g,x)$ in the Walsh--Hadamard basis such that
$f(t;g,x)$ admits the expansion
$f(t;g,x)=\langle \alpha(g,x), Z(t)\rangle$,
and $\alpha(g,x)$ is \emph{approximately sparse} in the sense that
$
\sum_{k > s(g,x)} |\alpha_{(k)}(g,x)|
\;\le\;
C(g,x)\, s(g,x)^{-\gamma},~\gamma>0,
$
where $\alpha_{(k)}(g,x)$ denotes the $k$-th largest coefficient in magnitude.
\end{assumption}
Assumption~\ref{ass:sparsity} converts an exponentially large treatment space into a high-dimensional but tractable estimation problem via sparse learning. Sparsity/compressibility can be probed by model selection stability, interaction-order diagnostics, and out-of-sample validation. Approximate sparsity yields graceful degradation (slower rates and wider intervals) rather than failure; one may also impose hierarchical truncations when warranted by domain knowledge.

Assumptions \ref{ass:local}--\ref{ass:sparsity} jointly formalize the sense in which individualized causal inference under interference with combinatorial treatments is \emph{well-posed yet nontrivial}.

\section{Identification}\label{sec:identification}


Throughout this section we first formalize the causal objects of interest and then state an identification theorem that connects these objects to observable functionals of the data-generating distribution. 

For a generic unit with covariates $X=x$ and configuration $G=g$, recall the definition of the own-slate conditional mean response in Assumption~\ref{ass:sparsity}.
To emphasize the dependence of the local configuration on the global assignment, we may write $G_i(\mathbf{t})$ as $G_i^{\langle \mathbf{t}\rangle}$.
Given $(g,x)$, we target three families of individualized contrasts:
\begin{align}
\theta_i^{T}(t\!\to\!t';g)
&:= f(t';g,x)-f(t;g,x),\label{eq:thetaT}\\
\theta_i^{G}(g\!\to\!g';t)
&:= f(t;g',x)-f(t;g,x),\label{eq:thetaG}\\
\theta_i^{G,T}\big((g,t)\!\to\!(g',t')\big)
&:= f(t';g',x)-f(t;g,x).\label{eq:thetaGT}
\end{align}
These contrasts correspond to (i) toggling the unit's own treatment slate holding the interference environment fixed,
(ii) changing the interference environment holding the slate fixed, and (iii) changing both simultaneously.

\subsection{Identification via local orthogonal moments}

A central difficulty is that even under randomized assignments, \emph{conditioning or localizing on the realized configuration} can induce dependence between a unit's own slate and its interference environment.
This creates a form of endogeneity that invalidates naive regression of $Y$ on $Z(T)$ within localized neighborhoods.
We therefore identify the Walsh coefficients through an orthogonalized (residualized) moment equation.

Define the nuisance functions
$
\mu(g,x):=\mathbb{E}[Y\mid G=g,\,X=x],
~
m(g,x):=\mathbb{E}[Z(T)\mid G=g,\,X=x],
$
and the residuals
$
\widetilde Y := Y-\mu(G,X),
~
\widetilde Z := Z(T)-m(G,X).
$


Fix a target configuration $g\in\mathcal{G}$.
Let $w_g(G)$ denote a nonnegative localization weight.
This weight is the population analogue of the kernel or $k$NN weights
defined in Section~2, and is designed to upweight observations whose
realized configurations are close to $g$.
Concretely, one may take\footnote{Here $g\in\mathcal{G}$ denotes a fixed target configuration at which inference
is performed, while $G$ is a $\mathcal{G}$-valued random variable representing
the realized configuration of a randomly sampled unit.
The weight $w_g(G)$ therefore assigns larger mass to realizations of $G$ that
are closer to the target $g$ under the rooted-graph distance $d_R$.
}
\[
w_g(G)
\;\propto\;
K_G\!\big(d_R(G,g)/b_G\big),
\]
with normalization ensuring unit expectation.
All identification results below are stated for a generic choice of such
localization weights.
For $\alpha\in\mathbb{R}^{2^p}$ define the score
$
\Psi(\alpha; g)
\;:=\;
\mathbb{E}\!\left[
w_g(G)\,
\widetilde Z\,
\big\{\widetilde Y-\widetilde Z^\top \alpha\big\}
\right].
$

This is the population analogue of a localized regression of $\widetilde Y$ on $\widetilde Z$.
Local overlap (Assumption~\ref{ass:overlap}) guarantees that the relevant directions of $\widetilde Z$ exhibit non-degenerate variation locally, ensuring well-posedness.

\begin{theorem}[Identification of localized Walsh coefficients]\label{thm:identification}
Assume Assumptions~2.1--2.4.
Fix a target pair $(g,x)$ in the interior of the support.
Then there exists a (possibly localized) coefficient vector $\alpha^\star(g,x)$ such that
\begin{equation}\label{eq:root}
\Psi\!\big(\alpha^\star(g,x);g\big)=0.
\end{equation}
Moreover, under the local overlap condition in Assumption~2.3, $\alpha^\star(g,x)$ is unique within the model class implied by Assumption~2.4 (e.g., the sparse/near-sparse cone), and it identifies the response function in Assumption~\ref{ass:sparsity} in the sense that
$
f(t;g,x) \;=\; \langle \alpha^\star(g,x),\, Z(t)\rangle
\qquad \text{for all treatment slates $t\in\{-1,1\}^p$.}
$
Consequently, for any unit $i$ and any global assignment $\mathbf{t}$,
\[
\mathbb{E}\!\left[\,Y_i(\mathbf{t}) \mid X_i=x_i\,\right]
\;=\;
\Big\langle \alpha^\star\!\big(G_i^{\langle \mathbf{t}\rangle},x_i\big),\, Z(t_i)\Big\rangle,
\]
and the individualized contrasts in \eqref{eq:thetaT}--\eqref{eq:thetaGT} are identified as unique functions of $\alpha^\star(\cdot,\cdot)$.
\end{theorem}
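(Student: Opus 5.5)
The plan is to treat $\Psi(\cdot;g)$ as the population normal equations of a weighted least-squares problem and to read off $\alpha^\star$ from the structural expansion in Assumption~\ref{ass:sparsity}.

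\textbf{Step 1: reduce the outcome to the Walsh model.} By Assumption~\ref{ass:local}, $Y=Y(T,G)$. Write
\[
Y=\langle \alpha(G,X),Z(T)\rangle+\varepsilon, \qquad \mathbb{E}[\varepsilon\mid T,G,X]=0 .
\]
This requires an unconfoundedness reading of the conditional mean, so that $\mathbb{E}[Y\mid T=t,G,X]=f(t;G,X)$, which I take as implicit in Assumptions~2.1 and~2.4.

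\textbf{Step 2: centre the outcome.} Taking conditional expectations gives $\mu(G,X)=\langle\alpha(G,X),m(G,X)\rangle$. Hence
\[
\widetilde Y=\langle \alpha(G,X),\widetilde Z\rangle+\varepsilon .
\]

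\textbf{Step 3: split the score.} Substituting into $\Psi$ and using $\mathbb{E}[\widetilde Z\varepsilon\mid G,X]=0$ yields
\[
\Psi(\alpha;g)=\mathbb{E}\bigl[w_g(G)\,\widetilde Z\widetilde Z^\top\{\alpha(G,X)-\alpha\}\bigr].
\]
Define $\alpha^\star(g,x)$ as the solution of $\Sigma_g\,\alpha=\mathbb{E}\bigl[w_g\,\widetilde Z\widetilde Z^\top\alpha(G,X)\bigr]$, where $\Sigma_g:=\mathbb{E}\bigl[w_g(G)\,\widetilde Z\widetilde Z^\top\bigr]$.

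\textbf{Step 4: existence and uniqueness.} On the kernel support, Assumption~\ref{ass:overlap} gives $\lambda_{\min}(\mathbb{E}[\widetilde Z\widetilde Z^\top\mid G,X])\ge\kappa$. Averaging with the nonnegative, unit-mean weights then gives $\lambda_{\min}(\Sigma_g)\ge\kappa>0$. So $\Sigma_g$ is invertible and $\alpha^\star=\Sigma_g^{-1}\mathbb{E}[w_g\widetilde Z\widetilde Z^\top\alpha(G,X)]$ is the unique root of \eqref{eq:root}. Uniqueness therefore holds on all of $\mathbb{R}^{2^p}$, and a fortiori within the sparse cone. One caveat: the constant character $Z_\emptyset\equiv 1$ has $\widetilde Z_\emptyset=0$, so overlap cannot hold literally for it. I will exclude the intercept from $\widetilde Z$ and recover $\alpha_\emptyset$ separately as $\alpha_\emptyset=\mu(g,x)-\sum_{S\neq\emptyset}\alpha_S\,m_S(g,x)$, using Step~2.

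\textbf{Step 5: identify the response function (main obstacle).} The claim $f(t;g,x)=\langle\alpha^\star,Z(t)\rangle$ needs $\alpha^\star(g,x)=\alpha(g,x)$. The root is a $\Sigma_g$-weighted average of $\alpha(G,X)$ over the kernel neighbourhood, so this holds exactly only if $\alpha(\cdot)$ is constant on that neighbourhood. I would handle this in two ways.

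\emph{Exact route.} Take the localization to be exact, with $w_g$ concentrating on $\{G=g,\,X=x\}$; this is the ``possibly localized'' qualifier in the statement. The conditional version of Step~3 then gives $\alpha^\star=\alpha(g,x)$ directly.

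\emph{Approximate route.} Use Assumption~\ref{ass:smooth}, applied to each $f(t;\cdot,x)$, to transfer Lipschitz control to the coefficients. By the inversion $\alpha_S=2^{-p}\sum_t Z_S(t)f(t)$, each $|\alpha_S(G)-\alpha_S(g)|$ is at most $\max_t L\,d_R(G,g)$. Combining this with $\|\Sigma_g^{-1}\|\le\kappa^{-1}$ gives $\|\alpha^\star-\alpha(g,x)\|=O(L\,b_G/\kappa)$, which vanishes as $b_G\to0$.

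I expect this localization step to be the real gap, since it is where the statement can hold only exactly-in-the-limit or under exact conditioning, so I would state the identification result in the $b_G\to0$ (or exact-match) sense.

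\textbf{Step 6: global assignments and contrasts.} By Assumption~\ref{ass:local}, $Y_i(\mathbf{t})=Y_i(t_i,G_i^{\langle\mathbf{t}\rangle})$, and $G_i^{\langle\mathbf{t}\rangle}$ is deterministic given $\mathbf{t}$ and the network. Therefore $\mathbb{E}[Y_i(\mathbf{t})\mid X_i=x_i]=f(t_i;G_i^{\langle\mathbf{t}\rangle},x_i)$, and Step~5 evaluates this at $g=G_i^{\langle\mathbf{t}\rangle}$, which is assumed to lie in the interior of the support. The contrasts \eqref{eq:thetaT}--\eqref{eq:thetaGT} are differences of such evaluations, so they are unique functions of $\alpha^\star(\cdot,\cdot)$.
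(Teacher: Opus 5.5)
The paper contains no proof of Theorem~\ref{thm:identification}. Its only supporting reasoning is the remark that the population moment is ``exactly unbiased'' and, in Step~0 of the proof of Theorem~\ref{thm:lasso}, the identity $\mu=\alpha^\top m$ together with the convex-hull bound (Lemma~\ref{lem:convex}) that turns $\alpha(g_j,\cdot)-\alpha(g_i,\cdot)$ into $\mathrm{bias}_i(b_G)$. Your normal-equation argument is the rigorous version of that reasoning, and your diagnoses are right.

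\emph{Intercept.} Residualization annihilates $Z_\emptyset$, so Assumption~\ref{ass:overlap} cannot hold literally and $\Psi$ is blind to $\alpha_\emptyset$. Recovering $\alpha_\emptyset$ from $\mu(g,x)$ is essential, not cosmetic: $\theta^G$ and $\theta^{G,T}$ contain $\alpha_\emptyset(g',x)-\alpha_\emptyset(g,x)$, which the residualized estimator in \eqref{eq:thetaG-hat} cannot see.

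\emph{Localization.} With generic weights at fixed $b_G>0$, the root is a $\Sigma_g$-weighted average of $\alpha(G,X)$; for $p=1$ it is the $w_g\Var(T\mid G,X)$-weighted mean of $\alpha_1(G,X)$. So the literal claim $f=\langle\alpha^\star,Z\rangle$ fails in general. The paper tacitly concedes this by targeting $\alpha(g_i,x_i)$ and carrying $\mathrm{bias}_i(b_G)$ separately in Theorem~\ref{thm:lasso}. Your exact route in fact needs no limit in $g$: any $g'$ with $d_R(g,g')>0$ satisfies $d_R(g,g')\ge 2^{-(R+1)}/|V(B_R(g))|$. Hence an indicator kernel with $b_G$ below this threshold gives $w_g(G)\propto\mathbb{I}\{d_R(G,g)=0\}$, provided that event has positive probability.

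\emph{Uniqueness.} Since overlap makes $\Sigma_g$ invertible on the non-intercept block, your uniqueness is global, and Assumption~\ref{ass:sparsity} plays no role in identification.

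Two steps still need tightening.

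First, $w_g(G)$ localizes in $G$ only. In your approximate route, $\alpha^\star$ therefore averages $\alpha(G,X)$ over all $X$, whereas Assumption~\ref{ass:smooth} controls variation in $g$ only at fixed $x$. Your $O(Lb_G/\kappa)$ bound needs localization in $x$ as well, either exactly or by treating $X$ as root marks as in the footnote in Section~\ref{sec:theory}. The same issue affects Step~4. Assumption~\ref{ass:overlap} holds only on $\mathcal N(g,x)$, so without $x$-localization you get only $\lambda_{\min}(\Sigma_g)\ge\kappa\,\mathbb{E}[w_g(G)\,\mathbb{I}\{(G,X)\in\mathcal N(g,x)\}]$, which is positive only if that mass is. Your exact route, conditioning on $\{G=g,X=x\}$, avoids both problems.

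Second, the ``therefore'' in Step~6 uses more than the ignorability you posit in Step~1. Because $f$ conditions on the realized configuration, the identity $\mathbb{E}[Y_i(t_i,g)\mid X_i=x_i]=f(t_i;g,x_i)$ requires $\mathbb{E}[Y(t,g)\mid G=g,X=x]=\mathbb{E}[Y(t,g)\mid X=x]$. That is mean independence of potential outcomes from $G$, whose unmarked part (network position) is not randomized. Either of the following delivers Steps~1 and~6 at once:
\begin{itemize}
\item the condition $\{Y(t,g)\}_{t,g}\perp(T,G)\mid X$;
\item a structural form $Y_i(t,g)=h(t,g,X_i)+\eta_i$ with exogenous mean-zero $\eta_i$.
\end{itemize}
With that and the $x$-localization made explicit, your argument proves the corrected statement.
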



Theorem~\ref{thm:identification} formalizes a practical message:
\emph{once the interference environment is summarized by a rooted configuration, one can treat causal learning locally in configuration space, provided there is sufficient treatment variation locally}.
For example, in a platform experiment where a user’s outcome depends on her own feature slate and the slates of nearby friends, even globally randomized assignment can become locally confounded: restricting attention to users with similar realized neighborhoods (e.g., two treated friends and one control friend) can induce dependence between the user’s own slate and the neighborhood pattern. The orthogonal moment addresses this by residualizing both outcomes and treatment features on $(G,X)$.

Importantly, the identification result does not assert point identification of every direction in the combinatorial treatment space. If local overlap fails in certain Walsh directions—say, some treatment components are nearly deterministic within a configuration neighborhood—then those directions are not point-identified. This reflects an inherent data limitation: without local variation, individualized contrasts along those directions cannot be learned and the solution to the moment condition~\ref{eq:root} is non-unique. Developing robustness and partial-identification results is a natural direction for future work.

A natural question is:: \emph{``If assignments are randomized, why do we need orthogonalization at all?''}
The key is that our target is \emph{local} in configuration space, and conditioning/localizing on realized configurations is a form of post-assignment selection. Even under random assignment, selection on neighborhood patterns generally induces correlation between a unit’s slate and its neighbors’ slates. The orthogonal moment is constructed to be insensitive to this induced dependence and to enable principled inference within localized neighborhoods. At the population level, the moment condition in \eqref{eq:root} is exactly unbiased; approximation errors arise only at the estimation stage through localization, nuisance estimation, and finite-sample effects.


\begin{corollary}[Identification of individualized contrasts]\label{cor:ident-contrasts}
Under the conditions of Theorem~\ref{thm:identification}, the three contrast families
$\theta_i^{T}$, $\theta_i^{G}$, and $\theta_i^{G,T}$
are identified for any user-specified $(t,t',g,g')$ for which the corresponding directions satisfy local overlap.
\end{corollary}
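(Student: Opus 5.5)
The corollary should follow directly from Theorem~\ref{thm:identification}. The plan is to express each contrast as a linear functional of the identified Walsh coefficients and to check that only overlapping directions are involved. The proof proceeds in three steps.

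\textbf{Step 1: identification at each configuration.} Fix a unit $i$ with covariates $x$ and user-specified $(t,t',g,g')$. For each of the (at most two) configurations $g$ and $g'$ appearing in the contrast, Theorem~\ref{thm:identification} supplies a root $\alpha^\star(\cdot,x)$ of the localized orthogonal moment $\Psi(\cdot;\cdot)=0$. Under local overlap (Assumption~\ref{ass:overlap}) this root is unique within the model class of Assumption~\ref{ass:sparsity}, and it satisfies $f(s;\cdot,x)=\langle\alpha^\star(\cdot,x),Z(s)\rangle$ for every slate $s$.

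\textbf{Step 2: write each contrast as a linear functional.} Substituting these expansions into \eqref{eq:thetaT}--\eqref{eq:thetaGT} gives
\[
\theta_i^{T}=\langle\alpha^\star(g,x),\,Z(t')-Z(t)\rangle,\qquad
\theta_i^{G}=\langle\alpha^\star(g',x)-\alpha^\star(g,x),\,Z(t)\rangle,
\]
\[
\theta_i^{G,T}=\langle\alpha^\star(g',x),Z(t')\rangle-\langle\alpha^\star(g,x),Z(t)\rangle.
\]
Each right-hand side is a fixed linear map applied to identified quantities, so each contrast is identified. The joint contrast can also be obtained through the telescoping identity $\theta_i^{G,T}=\theta_i^{T}(t\to t';g')+\theta_i^{G}(g\to g';t)$. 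This gives a consistency check but is not needed for the argument.

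\textbf{Step 3: handle the case where overlap holds only in some directions.} This is the main obstacle. The hypothesis that ``the corresponding directions satisfy local overlap'' should be read as follows. Let $V_g$ be the span of the eigenvectors of $\mathbb{E}[w_g(G)\widetilde Z\widetilde Z^\top]$ whose eigenvalues are bounded below by $\kappa$. The set of roots of $\Psi(\cdot;g)=0$ is an affine set $\alpha^\star+V_g^{\perp}$, because $\Psi$ is affine in $\alpha$ with this matrix as its Jacobian. A linear functional $\langle\alpha,c\rangle$ is constant over that set exactly when $c\in V_g$.

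The argument therefore reduces to checking the relevant vectors:
\begin{itemize}
\item for $\theta^T$, the vector $c=Z(t')-Z(t)$ must lie in $V_g$;
\item for $\theta^G$ and $\theta^{G,T}$, the vector $Z(t)$ must lie in $V_g$, and $Z(t)$ or $Z(t')$ respectively must lie in $V_{g'}$.
\end{itemize}
Under these conditions every root of the moment equations yields the same value of the contrast, which gives point identification. If any one of these membership conditions fails, the corresponding contrast is not point-identified, in line with the remark after Theorem~\ref{thm:identification}.

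Finally, the localization weight $w_g$ concentrates on the neighborhood $\mathcal N(g,x)$, and the eigenvalue bound in Assumption~\ref{ass:overlap} holds uniformly on that neighborhood. Hence the weighted matrix inherits the bound on $V_g$, which is exactly what the previous step used.
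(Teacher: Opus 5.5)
Your Steps 1--2 are what the paper relies on. The paper gives no separate proof: the corollary is just the last sentence of Theorem~\ref{thm:identification}, unpacked through \eqref{eq:thetaT}--\eqref{eq:thetaGT}. Step 3, however, which you present as the main content, has a genuine gap for $\theta^{G}$ and $\theta^{G,T}$.

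The dictionary contains the constant character $Z_\emptyset\equiv 1$, so $m_\emptyset\equiv 1$ and $\widetilde Z_\emptyset\equiv 0$. The $\emptyset$-row and column of $\Sigma_g:=\mathbb{E}[w_g(G)\widetilde Z\widetilde Z^\top]$ therefore vanish, and $e_\emptyset\in\ker\Sigma_g$ for every $g$. Consequently every $c\in V_g$ has $c_\emptyset=0$, whereas $\langle Z(t),e_\emptyset\rangle=1$ for every slate. So your conditions $Z(t)\in V_g$, $Z(t)\in V_{g'}$, $Z(t')\in V_{g'}$ can never hold. By your own converse, the structural and joint contrasts would then never be point-identified, which contradicts your Step 2 and the corollary. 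The underlying fact is that the root set of $\Psi(\cdot;g)=0$ leaves the intercept $\alpha_\emptyset(g,x)$ free, and that level is exactly what a cross-configuration contrast needs. The same computation shows that the $\lambda_{\min}$ in Assumption~\ref{ass:overlap} is always $0$, so the uniqueness you invoke in Step 1 must itself be read modulo $e_\emptyset$. A second, smaller error: the root set is $\alpha^\star+\ker\Sigma_g$, not $\alpha^\star+V_g^\perp$; the two differ whenever $\Sigma_g$ has eigenvalues in $(0,\kappa)$. The correct criterion is therefore $c\perp\ker\Sigma_g$, membership in $V_g$ is only sufficient, and your ``not identified otherwise'' converse is false. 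Your final paragraph is also circular, since $V_g$ was defined from the weighted matrix in the first place.

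The missing ingredient is the separately identified level $\mu(g,x)=\mathbb{E}[Y\mid G=g,X=x]$. Using $\mu=\alpha^\top m$ (the identity the paper uses in proving Theorem~\ref{thm:lasso}), write $f(t;g,x)=\mu(g,x)+\langle\alpha^\star(g,x),Z(t)-m(g,x)\rangle$. Then, for instance,
\[
\theta^{G}=\mu(g',x)-\mu(g,x)+\langle\alpha^\star(g',x),Z(t)-m(g',x)\rangle-\langle\alpha^\star(g,x),Z(t)-m(g,x)\rangle .
\]
The vectors that must be orthogonal to $\ker\Sigma_g$ or $\ker\Sigma_{g'}$ are now $Z(t)-m(g,x)$, $Z(t)-m(g',x)$ and $Z(t')-m(g',x)$, all with zero $\emptyset$-coordinate. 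Moreover, $\delta\in\ker\Sigma_g$ gives $\mathbb{E}[w_g(G)(\widetilde Z^\top\delta)^2]=0$, hence $\widetilde Z^\top\delta=0$ almost surely where $w_g(G)>0$. So $Z(s)-m(g,x)\perp\ker\Sigma_g$ holds automatically for any slate $s$ with positive conditional probability at $(g,x)$. With this representation, ``overlap in the corresponding directions'' reduces to positivity of the slates involved at the configurations involved, and the argument covers all three contrast families.
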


\begin{algorithm}[t]
\caption{Oracle (population) identification of individualized contrasts at $(g,x)$}
\label{alg:oracle-ident}
\begin{algorithmic}[1]
\STATE \textbf{Input:} target configuration $g$, covariates $x$, localization weights $w_{g}(\cdot)$, Walsh dictionary $Z(\cdot)$.
\STATE Compute nuisances $\mu(g,x)=\mathbb{E}[Y\mid G=g,X=x]$ and $m(g,x)=\mathbb{E}[Z(T)\mid G=g,X=x]$.
\STATE Form residuals $\widetilde Y=Y-\mu(G,X)$ and $\widetilde Z=Z(T)-m(G,X)$.
\STATE Solve for $\alpha^\star(g,x)$ such that $\Psi(\alpha^\star(g,x);g,x)=0$.
\STATE Define the response function $\widehat f(t;g,x)=\langle \alpha^\star(g,x), Z(t)\rangle$ for any slate $t$.
\STATE \textbf{Output:} for any user-specified $(t,t',g,g')$, compute contrasts
$\theta^T(t\!\to\!t';g)$, $\theta^G(g\!\to\!g';t)$, and $\theta^{G,T}((g,t)\!\to\!(g',t'))$
by plugging $\widehat f$ into \eqref{eq:thetaT}--\eqref{eq:thetaGT}.
\end{algorithmic}
\end{algorithm}

\section{Estimation}\label{sec:estimation}
This section gives sample analogues of the orthogonal moment equation in
Section~\ref{sec:identification} and yields estimators of
(i) localized Walsh coefficients and (ii) individualized causal contrasts.
In addition to Assumptions~\ref{ass:local}--\ref{ass:sparsity}, we impose the following
high-dimensional estimation conditions.

\begin{assumption}[Cross-fitted nuisance accuracy]\label{as:smooth}
Let $\mu(g,x)=\E[Y\mid G=g,X=x]$ and $m(g,x)=\E[Z(T)\mid G=g,X=x]$.
There exist cross-fitted estimators $\hat\mu$ and $\hat m$ (defined below) such that
$
\max_{i\in[N]}\big|\hat\mu(G_i,X_i)-\mu(G_i,X_i)\big| = o_p(1),~\max_{i\in[N]}\big\|\hat m(G_i,X_i)-m(G_i,X_i)\big\|_{\infty} = o_p(1).
$
\end{assumption}

\begin{assumption}[Weighted restricted eigenvalue and noise tails]\label{as:re}
Fix a configuration center $g\in\G$ and let $w_j(g)$ be the kernel weights defined in
\eqref{eq:weights} with $g_i$ replaced by $g$.
Let $\hat{\widetilde Z}_j$ denote the cross-fitted residualized treatment-feature vector defined in
\eqref{eq:cf-residuals}, and define the weighted Gram matrix
$
\widehat\Sigma(g) := \sum_{j=1}^N w_j(g)\,\hat{\widetilde Z}_j\hat{\widetilde Z}_j^\top.
$
There exists $\kappa_g>0$ such that $\widehat\Sigma(g)$ satisfies a restricted-eigenvalue condition
over the sparse cone associated with Assumption~\ref{ass:sparsity}, with probability tending to one.
Moreover, the (cross-fitted) regression noise is conditionally sub-Gaussian given $(G,X,T)$ with proxy
variance $\sigma^2$.
\end{assumption}

\begin{assumption}[Local sparsity/near-sparsity]\label{as:sparsity}
For each $(g,x)$, the Walsh coefficient vector $\alpha(g,x)$ in Assumption~\ref{ass:sparsity}
is sparse or approximately sparse with effective sparsity level $s(g,x)$ satisfying
$
s(g,x)\,\log(2^p) = o\!\left(n_{\mathrm{eff}}(g)\right),
$
where $n_{\mathrm{eff}}(g):=\big(\sum_{j=1}^N w_j(g)^2\big)^{-1}$ is the Kish effective sample size
associated with the weights $w_j(g)$.
\end{assumption}

Assumptions~\ref{as:smooth}--\ref{as:sparsity} are standard high-dimensional
estimation conditions that complement, rather than strengthen, the
identification assumptions in Section~\ref{sec:identification}.
Assumption~\ref{as:smooth} ensures that nuisance estimation errors are
second-order and do not affect the orthogonal moment at first order,
while Assumptions~\ref{as:re} and~\ref{as:sparsity} guarantee that the
resulting localized high-dimensional regression problem is well-posed
at the effective sample size scale $n_{\mathrm{eff}}(g)$.
These conditions are sufficient for consistent estimation and
valid inference, and are satisfied by a broad class of modern
machine-learning nuisance estimators and weighted sparse regressions.

We estimate the nuisance functions $\mu$ and $m$ and construct residuals using cross-fitting.
Partition the index set $[N]$ into $K_{\mathrm{cf}}\ge2$ disjoint folds
$\{\mathcal I_k\}_{k=1}^{K_{\mathrm{cf}}}$.
For each fold $k$, fit nuisance estimators $\hat\mu^{(-k)}(\cdot,\cdot)$ and
$\hat m^{(-k)}(\cdot,\cdot)$ using only observations with indices in $[N]\setminus\mathcal I_k$.
For each $i\in\mathcal I_k$, define the cross-fitted residuals 
\begin{equation}\label{eq:cf-residuals}
\begin{aligned}
:= Y_i-\hat\mu^{(-k)}(G_i,X_i),~
\hat{\widetilde Z}_i := Z(T_i)-\hat m^{(-k)}(G_i,X_i).
\end{aligned}
\end{equation}
We use $\{(\hat{\widetilde Y}_i,\hat{\widetilde Z}_i)\}_{i=1}^N$ as the debiased inputs for localized
high-dimensional regression.

\subsection{Localized weighted Lasso for Walsh coefficients}
For a target unit $i$, we estimate the coefficient vector at the unit's realized configuration
by setting $g=g_i$ and writing $w_j^{(i)}:=w_j(g_i)$ (which coincides with \eqref{eq:weights}).
Define the weighted Lasso estimator
\begin{equation}\label{eq:lasso}
\hat{\alpha}_i
\in
\mathop{\mathrm{argmin}}\limits_{\beta\in\R^{2^p}}
\left\{
\sum_{j=1}^N w_j^{(i)}\big(\hat{\widetilde Y}_j-\hat{\widetilde Z}_j^\top\beta\big)^2
\;+\;
\lambda_i\|\beta\|_1
\right\},
\end{equation}
where $\lambda_i \asymp \hat\sigma\sqrt{\frac{\log(2^p)}{n_{\mathrm{eff}}(g_i)}}$ and $\hat\sigma$ is any consistent estimator of the noise scale (e.g., from localized residuals).
For any slate $t\in\{-1,1\}^p$, define the plug-in response estimate
$
\widehat f_i(t;g_i,x_i):=\langle \hat\alpha_i, Z(t)\rangle.
$
Then for any $(t,t')$,
\begin{equation}\label{eq:thetaT-hat}
\widehat{\theta}_i^{T}(t\!\to\!t';g_i)
=
\langle \hat\alpha_i,\,Z(t')-Z(t)\rangle.
\end{equation}
To estimate structural and joint effects, repeat \eqref{eq:lasso} with weights centered at a second
configuration $g'$ (i.e., replace $w_j^{(i)}$ by $w_j(g')$) to obtain an estimator $\hat\alpha_i(g')$,
and set
\begin{equation}\label{eq:thetaG-hat}
\begin{aligned}
\widehat{\theta}_{i}^{G}(g_i \to g';t)
&=
\big\langle \hat{\alpha}_{i}(g')-\hat{\alpha}_{i},\, Z(t)\big\rangle,\\
\widehat{\theta}_{i}^{G,T}((g_i,t)\to(g',t'))
&=
\langle \hat{\alpha}_{i}(g'), Z(t')\rangle-\langle \hat{\alpha}_{i}, Z(t)\rangle.
\end{aligned}
\end{equation}

\subsection{Debiased inference for an own-treatment contrast}
For a user-specified contrast $(t,t')$, define the direction
$
v_{t,t'}:=Z(t')-Z(t)\in\R^{2^p}.
$
Let $\widehat\Sigma(g_i)$ be the weighted Gram matrix at $g=g_i$.
Compute an approximate inverse-direction $\hat\gamma_i$ by any standard high-dimensional procedure
(e.g., CLIME/nodewise regression), for instance via
\begin{equation}\label{eq:gamma}
\begin{aligned}
&\hat\gamma_i \in \arg\min_{\gamma\in\R^{2^p}}\|\gamma\|_1
\\ \text{s.t.}~
&\left\|\widehat\Sigma(g_i)\gamma - v_{t,t'}\right\|_\infty \le \eta_i,~
\eta_i \asymp \sqrt{\frac{\log(2^p)}{n_{\mathrm{eff}}(g_i)}}.
\end{aligned}
\end{equation}
Define the debiased estimator
\begin{equation}\label{eq:debiased}
\widetilde\theta_i^{T}(t\!\to\! t';g_i)
=
v_{t,t'}^\top \hat\alpha_i
+
\hat\gamma_i^\top
\sum_{j=1}^N w^{(i)}_j\,
\hat{\widetilde Z}_j\,
\big\{\hat{\widetilde Y}_j-\hat{\widetilde Z}_j^\top \hat\alpha_i\big\}.
\end{equation}
The following section shows that under Assumptions~\ref{as:smooth}--\ref{as:sparsity},
$\widetilde\theta_i^{T}$ admits asymptotically normal inference with an estimated variance obtained
from the weighted empirical second moment of the influence term in \eqref{eq:debiased}.

\section{Theoretical analysis}\label{sec:theory}
This section develops finite-sample and asymptotic guarantees for the localized estimators in Section~\ref{sec:estimation}. The key point is that the statistical difficulty is controlled by the \emph{effective local sample size} and \emph{local spectral sparsity}, not by the exponential number of treatment slates. Write $d:=2^p$ for the ambient Walsh dimension.
Fix a target unit $i$ and recall the kernel weights $w_j^{(i)}$ in \eqref{eq:weights}.
Recall the effective local sample size as
$
n_i \;:=\; n_{\mathrm{eff}}(g_i) \;=\; \left(\sum_{j=1}^N (w_j^{(i)})^2\right)^{-1}.
$
Let $s_i:=s(g_i,x_i)$ denote the effective sparsity level from Assumption~\ref{as:sparsity}. For nuisance estimation, define the sup errors
$
\delta_{\mu}:=\max_{j\in[N]}\big|\hat\mu(G_j,X_j)-\mu(G_j,X_j)\big|,~
\delta_{m}:=\max_{j\in[N]}\big\|\hat m(G_j,X_j)-m(G_j,X_j)\big\|_\infty.
$
Finally, to separate sampling error from \emph{localization bias}, define\footnote{Equivalently, covariates can be viewed as additional root marks in the
configuration, so localization in $g$ implicitly restricts attention to
samples with comparable covariate values.}
$
\mathrm{bias}_i(b_G)
:=
\sup_{t\in\{-1,1\}^p}\ \sup_{g:\ d_R(g,g_i)\le b_G}
\big| f(t;g,x_i)-f(t;g_i,x_i)\big|.
$
By Assumption~\ref{ass:smooth}, for each fixed $(g_i,x_i)$ there exists a finite constant
$C_i<\infty$ (depending on $(g_i,x_i)$ but not on $b_G$) such that
$
\mathrm{bias}_i(b_G)\ \le\ C_i\, b_G.
$

\textbf{Step 1: Orthogonalization removes first-order nuisance effects}
We first record the key property that enables valid high-dimensional learning with flexible nuisances:
cross-fitting makes nuisance errors enter only at second order.

\begin{lemma}[Orthogonalization remainder]\label{lem:orth}
Let $\hat{\widetilde Y}_j$ and $\hat{\widetilde Z}_j$ be the cross-fitted residuals in
\eqref{eq:cf-residuals}, and define the corresponding oracle residuals
$\widetilde Y_j:=Y_j-\mu(G_j,X_j)$ and $\widetilde Z_j:=Z(T_j)-m(G_j,X_j)$.
Then, uniformly over $\alpha$ in any fixed $\ell_1$-ball,
the difference between the empirical weighted score built from
$(\hat{\widetilde Y}_j,\hat{\widetilde Z}_j)$ and the one built from $(\widetilde Y_j,\widetilde Z_j)$
is of order $O_p(\delta_\mu\,\delta_m)$.
In particular, if $\delta_\mu\,\delta_m=o_p(1)$, nuisance estimation does not contribute a first-order term.
\end{lemma}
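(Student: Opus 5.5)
Write $\Delta^\mu_j:=\hat\mu^{(-k)}(G_j,X_j)-\mu(G_j,X_j)$ and $\Delta^m_j:=\hat m^{(-k)}(G_j,X_j)-m(G_j,X_j)$ for $j\in\mathcal I_k$. Then
\[
\hat{\widetilde Y}_j=\widetilde Y_j-\Delta^\mu_j,\qquad \hat{\widetilde Z}_j=\widetilde Z_j-\Delta^m_j .
\]
The plan is to expand the empirical weighted score
\[
\widehat\Psi(\alpha):=\sum_j w_j^{(i)}\hat{\widetilde Z}_j\bigl(\hat{\widetilde Y}_j-\hat{\widetilde Z}_j^\top\alpha\bigr)
\]
around its oracle version $\Psi_N(\alpha)$, built from $(\widetilde Y_j,\widetilde Z_j)$. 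The difference splits into three kinds of terms, each measured in $\|\cdot\|_\infty$ and uniformly over $\|\alpha\|_1\le B$:
\begin{align*}
&\text{(a) first-order cross terms:}\ -\sum_j w_j\Delta^m_j(\widetilde Y_j-\widetilde Z_j^\top\alpha)\ \text{ and }\ -\sum_j w_j\widetilde Z_j(\Delta^\mu_j-\Delta^{m\top}_j\alpha),\\
&\text{(b) the product term:}\ \sum_j w_j\Delta^m_j(\Delta^\mu_j-\Delta^{m\top}_j\alpha).
\end{align*}
The key structural fact is that $\alpha$ enters linearly. So after Hölder's inequality, $|\Delta_j^{m\top}\alpha|\le\delta_m B$, and every term is controlled by a sup over finitely many coordinates times a constant depending on $B$.

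\textbf{Product term.} Term (b) is bounded directly:
\[
\Bigl\|\sum_j w_j\Delta^m_j(\Delta^\mu_j-\Delta^{m\top}_j\alpha)\Bigr\|_\infty\le \delta_m(\delta_\mu+B\delta_m),
\]
because the weights sum to one. This term is second order as claimed. The stated $O_p(\delta_\mu\delta_m)$ form absorbs the $\delta_m^2$ piece when $\delta_m\lesssim\delta_\mu$; otherwise the rate should be read as $\delta_m(\delta_\mu+\delta_m)$, and I would state this explicitly.

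\textbf{First-order cross terms (the main obstacle).} I would handle (a) by conditioning on the out-of-fold data $[N]\setminus\mathcal I_k$. Given that data, $\Delta^\mu_j$ and $\Delta^m_j$ are fixed functions of $(G_j,X_j)$. By the definitions of $\mu$ and $m$,
\[
\E[\widetilde Y_j\mid G_j,X_j]=0,\qquad \E[\widetilde Z_j\mid G_j,X_j]=0 .
\]
Hence each cross term has conditional mean zero, which is Neyman orthogonality at work.

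Its fluctuation is a weighted sum of mean-zero terms whose multipliers are bounded by $\delta_\mu+B\delta_m$ (resp. $\delta_m$). To bound it, I would use the sub-Gaussian tails of Assumption~\ref{as:re} together with $|Z_S|\le1$. A maximal inequality over the $d=2^p$ coordinates then gives a bound of order
\[
(\delta_\mu+\delta_m)\sqrt{\log d/n_i}.
\]

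The difficulty is that in the network setting the $j$'s are not independent. Residuals of neighboring units share treatment marks, and the conditioning argument needs the out-of-fold nuisance fit to be independent of in-fold residuals. I would handle this by assuming folds separated by more than $2R$ in graph distance, so that the local dependence of Assumption~\ref{ass:local} decouples folds. Within a fold, I would then apply a dependency-graph (bounded-degree) Bernstein bound; this only inflates constants.

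\textbf{Conclusion.} Collecting the bounds, the total difference is
\[
O_p\bigl(\delta_\mu\delta_m+\delta_m^2+(\delta_\mu+\delta_m)\sqrt{\log d/n_i}\bigr).
\]
The last summand is of smaller order than the $\sqrt{\log d/n_i}$ sampling error already present in the oracle score. Therefore, under $\delta_\mu\delta_m=o_p(1)$ and Assumption~\ref{as:smooth}, no first-order nuisance contribution remains, which is the claim of Lemma~\ref{lem:orth}.
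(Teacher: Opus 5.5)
Your proposal follows essentially the same route as the paper's proof (via its empirical score-perturbation lemma). The ingredients match: the same expansion into cross terms and product terms, the deterministic bound $\delta_m(\delta_\mu+B\delta_m)$ on the product terms using $\sum_j w_j=1$, and cross-fitting conditioning plus a weighted sub-Gaussian maximal inequality over the $d$ coordinates, giving $(\delta_\mu+\delta_m)\sqrt{\log d/n_i}$ for the cross terms. So, exactly as in the paper, what is actually established is $O_p\bigl(\delta_\mu\delta_m+\delta_m^2+(\delta_\mu+\delta_m)\sqrt{\log d/n_i}\bigr)$ rather than literally $O_p(\delta_\mu\delta_m)$, a point you rightly make explicit.

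Your extra treatment of network dependence goes beyond the paper, which simply posits conditional independence. Note, though, that in a connected graph the folds of a partition cannot all be more than $2R$ apart. In practice you would instead fit $\hat\mu^{(-k)},\hat m^{(-k)}$ only on units outside a $2R$-buffer around $\mathcal I_k$.
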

Lemma~\ref{lem:orth} formalizes why we can combine modern machine-learning nuisances with localized high-dimensional regression:
even though $\hat\mu$ and $\hat m$ may be complex, their errors do not bias the target moment at first order.
A common misconception is that ``any smoothing or selection invalidates orthogonality''; here the point is precisely that
orthogonalization targets the \emph{post-localization} endogeneity induced by conditioning on realized configurations.

\textbf{Step 2: Finite-sample rates for localized Walsh learning}
We now state a nonasymptotic oracle inequality for the localized weighted Lasso \eqref{eq:lasso}.
The rate is driven by $(s_i,\log d,n_i)$ plus a transparent localization bias term.

\begin{theorem}[Finite-sample error for localized weighted Lasso]\label{thm:lasso}
Fix a target unit $i$ and let $\hat\alpha_i$ be defined by \eqref{eq:lasso} with
$\lambda_i \asymp \hat\sigma\sqrt{\log(d)/n_i}$.
Assume Assumptions~\ref{as:re}--\ref{as:sparsity} and $\delta_\mu\delta_m=o_p(\lambda_i)$.
Then with probability tending to one, 
$\|\hat\alpha_i-\alpha(g_i,x_i)\|_1
\;\lesssim\;
s_i\,\sqrt{\frac{\log d}{n_i}}
\;+\;
s_i\,\mathrm{bias}_i(b_G)
\;+\;
s_i\,\delta_\mu\delta_m,\label{eq:l1-rate}$

$\|\hat\alpha_i-\alpha(g_i,x_i)\|_2
\;\lesssim\;
\sqrt{\frac{s_i\log d}{n_i}}
\;+\;
\mathrm{bias}_i(b_G)
\;+\;
\delta_\mu\delta_m,\label{eq:l2-rate}$

$\sum_{j=1}^N w_j^{(i)}\big(\hat{\widetilde Z}_j^\top(\hat\alpha_i-\alpha(g_i,x_i))\big)^2
\;\lesssim\;
\frac{s_i\log d}{n_i}
\;+\;
\mathrm{bias}_i(b_G)^2
\;+\;
(\delta_\mu\delta_m)^2.\label{eq:pred-rate}$


\end{theorem}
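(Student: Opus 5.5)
The plan is the standard Lasso oracle-inequality argument (basic inequality, cone condition, restricted eigenvalue), applied to the weighted, cross-fitted design. All error sources are collected into a single effective-noise vector. Write $\alpha^\star:=\alpha(g_i,x_i)$ and $\Delta:=\hat\alpha_i-\alpha^\star$, and decompose each working response as
\[
\hat{\widetilde Y}_j=\hat{\widetilde Z}_j^\top\alpha^\star+\varepsilon_j+b_j+r_j .
\]
Here $\varepsilon_j$ is the sub-Gaussian regression noise from Assumption~\ref{as:re}. The term $b_j$ is the localization error from replacing $f(\cdot;g_j,x_j)$ by $f(\cdot;g_i,x_i)$; it is nonzero only when $w_j^{(i)}>0$, i.e.\ when $d_R(g_j,g_i)\le b_G$, and there $|b_j|\lesssim \mathrm{bias}_i(b_G)$ because $|Z_S|\equiv1$ and the bias is uniform over slates $t$. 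The term $r_j$ collects the nuisance-substitution error. Optimality of $\hat\alpha_i$ in \eqref{eq:lasso} gives the basic inequality
\[
\sum_j w_j^{(i)}(\hat{\widetilde Z}_j^\top\Delta)^2+\lambda_i\|\hat\alpha_i\|_1\le 2\Big\langle \Delta,\sum_j w_j^{(i)}\hat{\widetilde Z}_j(\varepsilon_j+b_j+r_j)\Big\rangle+\lambda_i\|\alpha^\star\|_1 .
\]

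The second step is to bound the effective score $\|\sum_j w_j^{(i)}\hat{\widetilde Z}_j(\varepsilon_j+b_j+r_j)\|_\infty$.
\begin{itemize}
\item \emph{Noise term.} Cross-fitting makes $\hat{\widetilde Z}_j$ independent of the fold's noise. Its entries are bounded by $2+\delta_m$, since $|Z_S|\le1$ and $\|m\|_\infty\le1$. A weighted sub-Gaussian maximal inequality over the $d$ coordinates then gives $\sigma\sqrt{2\log d\cdot\sum_j (w_j^{(i)})^2}=\sigma\sqrt{\log d/n_i}$, with high probability.
\item \emph{Bias term.} Because the weights sum to one, this coordinate is at most $\lesssim \mathrm{bias}_i(b_G)$.
\item \emph{Nuisance term.} Here I would invoke Lemma~\ref{lem:orth}: the difference between the score built from the estimated residuals and the oracle score is $O_p(\delta_\mu\delta_m)$, uniformly over an $\ell_1$-ball. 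This ball will contain $\alpha^\star$ by approximate sparsity (Assumption~\ref{ass:sparsity} gives $\|\alpha^\star\|_1<\infty$).
\end{itemize}
Choosing the constant in $\lambda_i$ so that $\lambda_i\ge 4\|\text{noise score}\|_\infty$, and using $\hat\sigma\to\sigma$, the combined inequality takes the form
\[
\mathrm{pred}(\Delta)+\tfrac{\lambda_i}{2}\|\Delta_{S^c}\|_1\le \tfrac{3\lambda_i}{2}\|\Delta_S\|_1+2\|\Delta\|_1\,(\mathrm{bias}_i+\delta_\mu\delta_m)+2\|\alpha^\star_{S^c}\|_1 ,
\]
where $S$ is the set of the $s_i$ largest coefficients and $\mathrm{pred}(\Delta)=\Delta^\top\widehat\Sigma(g_i)\Delta$.

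The main obstacle comes next. The bias and nuisance terms enter as additive errors of the same order as $\lambda_i$, and could in principle break the cone condition. I would treat them in one of two ways. The first option is to absorb them into $\lambda_i$ (valid under $\delta_\mu\delta_m=o_p(\lambda_i)$, and assuming $\mathrm{bias}_i\lesssim\lambda_i$). The second, more robust option is to split into two cases: either $\Delta$ lies in the cone $\{\|\Delta_{S^c}\|_1\le 3\|\Delta_S\|_1+\text{tail}\}$, or the tail and bias terms dominate, in which case the bound holds trivially. The tail term $\|\alpha^\star_{S^c}\|_1\le C s_i^{-\gamma}$ is handled by choosing $s_i$ to balance it against $s_i\lambda_i$, which the condition $s_i\log d=o(n_i)$ of Assumption~\ref{as:sparsity} permits.

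On the cone, the restricted-eigenvalue condition of Assumption~\ref{as:re} gives $\|\Delta_S\|_1\le\sqrt{s_i}\|\Delta\|_2\le\sqrt{s_i\,\mathrm{pred}(\Delta)/\kappa_g}$. Substituting this yields $\mathrm{pred}(\Delta)\lesssim s_i(\lambda_i+\mathrm{bias}_i+\delta_\mu\delta_m)^2$. Reading this off gives the claimed rates:
\begin{itemize}
\item the prediction rate \eqref{eq:pred-rate}, with $s_i$ multiplying only the variance part if the bias and nuisance errors are kept as weighted $\ell_2$ approximation errors (oracle-inequality form) rather than sup-norm score terms;
\item the $\ell_2$ bound \eqref{eq:l2-rate}, via restricted eigenvalues;
\item the $\ell_1$ bound \eqref{eq:l1-rate}, via the cone, $\|\Delta\|_1\le4\|\Delta_S\|_1\le4\sqrt{s_i}\|\Delta\|_2$.
\end{itemize}
The bookkeeping needed to match exactly the stated $s_i$-dependence on the bias term is the step I would check most carefully.
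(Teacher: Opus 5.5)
Your proposal follows essentially the same route as the paper's proof: the same weighted basic inequality, and the same three-way split of the effective score into sub-Gaussian noise (Lemma~\ref{lem:subg}), a localization-bias part bounded by a multiple of $\mathrm{bias}_i(b_G)$, and a nuisance part controlled through Lemma~\ref{lem:orth}, followed by the cone condition and the restricted-eigenvalue bound; one small correction is that the residualized piece involving $m(G_j,X_j)$ needs the convex-hull Lemma~\ref{lem:convex}, not just $|Z_S|\equiv 1$. Your first option, absorbing the bias and nuisance terms into $\lambda_i$ under $\mathrm{bias}_i(b_G)\lesssim\lambda_i$, is exactly what the paper's cone step tacitly requires, and under it your intermediate bound $s_i(\lambda_i+\mathrm{bias}_i(b_G)+\delta_\mu\delta_m)^2$ (which is also what the paper's Step~3 algebra actually yields before it drops the $s_i$ factor on the bias) is already dominated by $s_i\lambda_i^2$, so the stated rates follow without any oracle-inequality refinement---and that refinement, applied to the nuisance error as you suggest, would in fact give $(\delta_\mu+\|\alpha(g_i,x_i)\|_1\delta_m)^2$ rather than $(\delta_\mu\delta_m)^2$, since the product structure lives in the score and not in $\|r\|_{w,2}$.
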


\paragraph{What this theorem says (and what it does not).}
Theorem~\ref{thm:lasso} is a \emph{local} result: it quantifies how well we can learn a unit's Walsh coefficients
near its realized configuration.
It does \emph{not} claim uniform recovery over all configurations without additional covering arguments.
The theorem highlights the intended tradeoff: shrinking $b_G$ reduces $\mathrm{bias}_i(b_G)$ but decreases $n_i$,
while increasing $b_G$ increases the effective sample size but introduces localization bias.

\textbf{Step 3: Consequences for individualized causal contrasts}
We next translate coefficient error into error for individualized effects.

\begin{corollary}[Plug-in error for individualized contrasts]\label{cor:ite}
Fix a unit $i$ and consider any pair of slates $(t,t')$.
Let $v_{t,t'}:=Z(t')-Z(t)$ as in Section~\ref{sec:estimation}.
Then
$
\big|\widehat{\theta}_i^{T}(t\!\to\!t';g_i)-\theta_i^{T}(t\!\to\!t';g_i)\big|
\;\le\;
\|v_{t,t'}\|_\infty\,\|\hat\alpha_i-\alpha(g_i,x_i)\|_1
\;\le\;
2\,\|\hat\alpha_i-\alpha(g_i,x_i)\|_1,
$
and analogously for $\widehat{\theta}_{i}^{G}$ and $\widehat{\theta}_{i}^{G,T}$ in \eqref{eq:thetaG-hat}
(with an additional error term contributed by the second fit at $g'$).
Consequently, Theorem~\ref{thm:lasso} yields $\big|\widehat{\theta}_i^{T}(t\!\to\!t';g_i)-\theta_i^{T}(t\!\to\!t';g_i)\big|\lesssim s_i\sqrt{\frac{\log d}{n_i}} + s_i\,\mathrm{bias}_i(b_G) + s_i\,\delta_\mu\delta_m.$
\end{corollary}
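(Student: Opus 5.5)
The plan is a direct Hölder argument followed by substitution of the rates from Theorem~\ref{thm:lasso}.

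\emph{First step: write the estimation error as a linear functional.} By the definition of $f$ in Assumption~\ref{ass:sparsity} and the identification in Theorem~\ref{thm:identification}, the target is
\[
\theta_i^{T}(t\!\to\!t';g_i)=f(t';g_i,x_i)-f(t;g_i,x_i)=\langle \alpha(g_i,x_i),\,v_{t,t'}\rangle .
\]
The estimator in \eqref{eq:thetaT-hat} is $\langle\hat\alpha_i,v_{t,t'}\rangle$. Their difference is therefore $\langle \hat\alpha_i-\alpha(g_i,x_i),\,v_{t,t'}\rangle$. The $\ell_1$--$\ell_\infty$ Hölder inequality then gives the first inequality of the corollary.

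\emph{Second step: bound $\|v_{t,t'}\|_\infty$.} Every Walsh character satisfies $Z_S(t)\in\{-1,1\}$. Hence each coordinate of $Z(t')-Z(t)$ lies in $\{-2,0,2\}$, and $\|v_{t,t'}\|_\infty\le 2$. This bound holds uniformly in $(t,t')$ and, crucially, does not depend on $d=2^p$. That independence is why the $\ell_1$ rate, and not the $\ell_2$ rate, is the appropriate one to use: the $\ell_2$ route would require $\|v_{t,t'}\|_2$, which can be of order $2^{p/2}$.

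\emph{Third step: substitute the rate.} On the event of probability tending to one from Theorem~\ref{thm:lasso}, plug in the $\ell_1$ bound. The factor $2$ is absorbed into $\lesssim$, giving
\[
s_i\sqrt{\log d/n_i}+s_i\,\mathrm{bias}_i(b_G)+s_i\,\delta_\mu\delta_m .
\]

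\emph{Structural and joint contrasts.} Decompose the error with a triangle inequality. For the joint contrast,
\begin{align*}
\widehat\theta^{G,T}_i-\theta^{G,T}_i
={}& \langle \hat\alpha_i(g')-\alpha(g',x_i),\,Z(t')\rangle\\
&-\langle \hat\alpha_i-\alpha(g_i,x_i),\,Z(t)\rangle .
\end{align*}
Since $\|Z(t)\|_\infty=1$, this is bounded by
\[
\|\hat\alpha_i(g')-\alpha(g',x_i)\|_1+\|\hat\alpha_i-\alpha(g_i,x_i)\|_1 .
\]
The structural contrast $\widehat\theta^{G}_i$ is handled the same way with $t'=t$. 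Theorem~\ref{thm:lasso}, applied at the center $g'$, bounds the second fit with $n_{\mathrm{eff}}(g')$, $s(g',x_i)$ and the localization bias at $g'$. This is the "additional error term."

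\emph{Main obstacle.} The only delicate point is applying Theorem~\ref{thm:lasso} at a center $g'$ that is not a realized configuration. That requires Assumptions~\ref{as:re}--\ref{as:sparsity} to hold at $g'$, which Assumption~\ref{as:re} permits since it is stated for a generic center $g$. It also requires intersecting the two high-probability events, which still has probability tending to one.
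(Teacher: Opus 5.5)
Your proposal is correct and follows the paper's proof. The paper writes the error as $\langle \hat\alpha_i-\alpha(g_i,x_i), v_{t,t'}\rangle$, applies H\"older's inequality with $\|v_{t,t'}\|_\infty\le 2$ (each Walsh coordinate is $\pm1$), substitutes the $\ell_1$ rate from Theorem~\ref{thm:lasso}, and treats the structural and joint contrasts by ``the same argument applied to the two localized fits''; your triangle-inequality decomposition using $\|Z(t)\|_\infty=1$, and your remarks on needing the assumptions at $g'$ and intersecting the two high-probability events, just spell out what the paper leaves implicit.
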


\begin{remark}
A common objection is that ``the contrast is dense, so shouldn't the error be exponential?"]\label{rem:dense}
Note that it is tempting (but misleading) to bound the contrast error by $\|v_{t,t'}\|_2\|\hat\alpha_i-\alpha\|_2$,
which would scale as $\sqrt{d}$ and obscure the point of the spectral approach.
Corollary~\ref{cor:ite} uses the correct geometry: since Walsh characters are uniformly bounded,
$\|v_{t,t'}\|_\infty\le 2$, the relevant control is via the $\ell_1$ error of a sparse vector.
This is the mechanism by which the exponential slate space is converted into a high-dimensional but tractable problem,
with complexity entering only through $\log d = \log(2^p)=\Theta(p)$.
\end{remark}
\textbf{Step 4: Debiased inference for a user-specified contrast}
We now state an asymptotic normality result for the debiased estimator \eqref{eq:debiased}.
This provides valid uncertainty quantification for individualized contrasts.

\begin{theorem}[Asymptotic normality of the debiased contrast]\label{thm:clt}
Fix a unit $i$ and a contrast $(t,t')$ with direction $v_{t,t'}$.
Let $\widetilde\theta_i^{T}$ be defined in \eqref{eq:debiased} and let $\gamma_i^\star$ denote the population solution
of the local linear system $\Sigma(g_i)\gamma=v_{t,t'}$, where
$\Sigma(g_i):=\E[w_{g_i}(G)\,\widetilde Z\widetilde Z^\top]$ is the population analogue of $\widehat\Sigma(g_i)$.
Assume:
(i) $\gamma_i^\star$ exists and is sparse or approximately sparse;
(ii) the estimator $\hat\gamma_i$ in \eqref{eq:gamma} satisfies $\|\hat\gamma_i-\gamma_i^\star\|_1=o_p(1)$;
(iii) $(s_i+\|\gamma_i^\star\|_0)\log d = o(\sqrt{n_i})$ and $\mathrm{bias}_i(b_G)=o(n_i^{-1/2})$;
(iv) a central limit theorem holds for the weighted influence sum induced by the localization weights conditioning that $\delta_\mu\delta_m = o_p(n_i^{-1/2})$ in Theorem~\ref{thm:lasso}, then
\[
\sqrt{n_i}\Big(\widetilde\theta_i^{T}(t\!\to\!t';g_i)-\theta_i^{T}(t\!\to\!t';g_i)\Big)
\ \Rightarrow\ \mathcal{N}\!\big(0,\sigma_{\theta,i}^2\big),
\]
where $\sigma_{\theta,i}^2=\Var(\gamma_i^{\star\top}\widetilde Z\,\varepsilon)$ and
$\varepsilon:=\widetilde Y-\widetilde Z^\top\alpha(g_i,x_i)$.
A consistent variance estimator is
$
\hat\sigma_{\theta,i}^2
\;:=\;
n_i\sum_{j=1}^N (w_j^{(i)})^2\big(\hat\gamma_i^\top \hat{\widetilde Z}_j\,\hat\varepsilon_j\big)^2,
\qquad
\hat\varepsilon_j:=\hat{\widetilde Y}_j-\hat{\widetilde Z}_j^\top\hat\alpha_i.
$
\end{theorem}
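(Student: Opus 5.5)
The proof follows the standard debiased-Lasso route: decompose the estimation error into a leading linear term plus remainders, and bound each remainder by $o_p(n_i^{-1/2})$. Write $\alpha:=\alpha(g_i,x_i)$, $\Delta:=\hat\alpha_i-\alpha$, and $\hat\varepsilon_j=\hat{\widetilde Y}_j-\hat{\widetilde Z}_j^\top\hat\alpha_i=\hat\varepsilon^{\alpha}_j-\hat{\widetilde Z}_j^\top\Delta$, where $\hat\varepsilon^\alpha_j:=\hat{\widetilde Y}_j-\hat{\widetilde Z}_j^\top\alpha$. Substituting into \eqref{eq:debiased} and using $\theta_i^T=v_{t,t'}^\top\alpha$ (Theorem~\ref{thm:identification}) gives
\[
\widetilde\theta_i^{T}-\theta_i^T
=\hat\gamma_i^\top\sum_j w_j^{(i)}\hat{\widetilde Z}_j\hat\varepsilon^\alpha_j
-\big(\widehat\Sigma(g_i)\hat\gamma_i-v_{t,t'}\big)^\top\Delta .
\]
The second term is bounded by $\eta_i\|\Delta\|_1$. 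The constraint in \eqref{eq:gamma} together with Theorem~\ref{thm:lasso} gives $\sqrt{\log d/n_i}\,\big(s_i\sqrt{\log d/n_i}+s_i\,\mathrm{bias}_i+s_i\delta_\mu\delta_m\big)$. After multiplying by $\sqrt{n_i}$, this is $s_i\log d/\sqrt{n_i}+s_i\sqrt{\log d}\,(\mathrm{bias}_i+\delta_\mu\delta_m)=o_p(1)$ by (iii)--(iv). One needs $\gamma_i^\star$ to be feasible for \eqref{eq:gamma} with high probability, which gives $\|\hat\gamma_i\|_1\le\|\gamma_i^\star\|_1$; this follows from concentration of $\widehat\Sigma(g_i)$ at rate $\sqrt{\log d/n_i}$ under sub-Gaussianity and bounded Walsh features.

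Next I split the score term as
\[
\gamma_i^{\star\top}S^{\mathrm{or}}+(\hat\gamma_i-\gamma_i^\star)^\top S^{\mathrm{or}}+\hat\gamma_i^\top(\hat S-S^{\mathrm{or}}),
\]
where $S^{\mathrm{or}}:=\sum_j w_j^{(i)}\widetilde Z_j\varepsilon_j$ uses oracle residuals and $\hat S:=\sum_j w_j^{(i)}\hat{\widetilde Z}_j\hat\varepsilon^\alpha_j$. The three pieces are handled as follows.

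\begin{itemize}
\item \textbf{Oracle cross term.} By Hölder, $|(\hat\gamma_i-\gamma_i^\star)^\top S^{\mathrm{or}}|\le\|\hat\gamma_i-\gamma_i^\star\|_1\|S^{\mathrm{or}}\|_\infty$. Weighted sub-Gaussian maximal inequalities give $\|S^{\mathrm{or}}\|_\infty=O_p(\sqrt{\log d/n_i})$. One also needs a rate $\|\hat\gamma_i-\gamma_i^\star\|_1=O_p(\|\gamma_i^\star\|_0\sqrt{\log d/n_i})$, a standard CLIME bound that sharpens (ii). The product is then $o_p(n_i^{-1/2})$ by (iii).
\item \textbf{Nuisance term.} The difference $\hat S-S^{\mathrm{or}}$ is controlled by Lemma~\ref{lem:orth}: the score difference in sup norm is $O_p(\delta_\mu\delta_m)$, and the $\alpha$-dependent first-order pieces vanish by cross-fitting and conditional mean zero. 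This term contributes $\|\hat\gamma_i\|_1\,O_p(\delta_\mu\delta_m)=o_p(n_i^{-1/2})$.
\item \textbf{Localization bias.} The gap between $\varepsilon_j$ and the true noise for units $j$ with $g_j\neq g_i$ appears inside $S^{\mathrm{or}}$. Within the kernel support its magnitude is at most $2\,\mathrm{bias}_i(b_G)$, by bounded characters and the $\ell_1$ summability of $\alpha$. Its contribution is at most $\|\gamma_i^\star\|_1 O(\mathrm{bias}_i)=o(n_i^{-1/2})$.
\end{itemize}

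The leading term is $\sqrt{n_i}\,\gamma_i^{\star\top}\sum_j w_j^{(i)}\widetilde Z_j\varepsilon_j$. Its convergence to $\mathcal N(0,\sigma^2_{\theta,i})$ is exactly the CLT granted by (iv). The variance normalization matches because $n_i\sum_j(w_j^{(i)})^2=1$. For the variance estimator, I replace $\hat\gamma_i,\hat{\widetilde Z}_j,\hat\varepsilon_j$ by their oracle counterparts. Each replacement error is bounded, using $\|\hat{\widetilde Z}_j\|_\infty\le 2+\delta_m$, by a multiple of $\|\hat\gamma_i-\gamma_i^\star\|_1$, $\|\Delta\|_1$, or $\delta_\mu+\delta_m$, all of which are $o_p(1)$. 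A weighted law of large numbers for $n_i\sum_j (w_j^{(i)})^2(\gamma_i^{\star\top}\widetilde Z_j\varepsilon_j)^2$ then finishes the argument; this law of large numbers is implied by the same local-dependence structure behind (iv).

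I expect the main obstacle to be the bias and nuisance remainder. The localization-bias term has to be shown negligible after the $\gamma$-weighting, not merely in the $\ell_1$ rate of $\Delta$. The clean route is to treat $f(T_j;g_j,x_j)-f(T_j;g_i,x_i)$ as a deterministic perturbation of the response, and then apply the sup-bound $\mathrm{bias}_i(b_G)$ together with the compact kernel support. A secondary subtlety is establishing feasibility of $\gamma_i^\star$ under the localized, dependent weights. I would handle it with a union bound over the $d^2$ entries of $\widehat\Sigma(g_i)-\Sigma(g_i)$, using the effective sample size $n_i$ in a weighted Hoeffding inequality.
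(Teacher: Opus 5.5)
Your proposal is correct and follows essentially the same route as the paper's proof. You use the same exact decomposition: $\hat\gamma_i^\top\hat S$ plus a debiasing remainder bounded by $\eta_i\|\Delta\|_1$ via Theorem~\ref{thm:lasso}, a $(\hat\gamma_i-\gamma_i^\star)$ cross term controlled by an assumed CLIME $\ell_1$ rate, a nuisance remainder controlled by Lemma~\ref{lem:orth}, a localization-bias term, a leading term sent to the CLT in (iv), and a variance estimator handled by oracle replacement plus a weighted law of large numbers.

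The one step where you differ is the bias term, and there your version is the sound one. Your deterministic bound $4\,\mathrm{bias}_i(b_G)\|\gamma_i^\star\|_1$ uses $|b_{ij}|\le 2\,\mathrm{bias}_i(b_G)$, $|\widetilde Z_{jk}|\le 2$ and $\sum_j w_j=1$. The paper's bound instead factors the $j$-dependent $b_{ij}$ out of $\|\sum_j w_j\hat{\widetilde Z}_j\|_\infty$, which is not valid in general.

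Justify $|b_{ij}|\le 2\,\mathrm{bias}_i(b_G)$ by the convex-hull argument behind \eqref{eq:bias-linear}, not by $\ell_1$ summability. Your bound needs $\|\gamma_i^\star\|_1=O(1)$. That same condition gives $\|\hat\gamma_i\|_1=O_p(1)$ directly from (ii), so the separate feasibility argument is not needed for that purpose.
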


Theorem~\ref{thm:clt} is an \emph{individualized} inference result: it yields asymptotically valid uncertainty for a
user-chosen contrast at a specific unit and configuration neighborhood.
It is not a claim of uniform inference over all units and all slates without further structure.
The key requirement is that the effective local sample size $n_i$ grows fast enough relative to the local complexity
($s_i$ and the sparsity of $\gamma_i^\star$), so that the debiasing remainder is asymptotically negligible.

Together, Theorems~\ref{thm:lasso} and~\ref{thm:clt} show that individualized causal learning under interference can be statistically feasible
even with exponentially many slates: the price of combinatorial treatments enters through $\log(2^p)=\Theta(p)$ and local sparsity, while interference is handled through localization and effective sample size. In addition, we deduce the discussion of model robustness under local overlap failure (Assumption~\ref{ass:overlap}), which induces the partial identification bound, in the Appendix.


\begin{algorithm}[t]
\caption{Localized DR-Lasso and debiased inference for unit $i$}
\label{alg:est}
\begin{algorithmic}[1]
\STATE \textbf{Input:} data $\{(Y_j,T_j,X_j,g_j)\}_{j=1}^N$, target unit $i$, kernel $K_G$, bandwidth $b_G$, folds $\{\mathcal I_k\}$, contrast $(t,t')$.
\STATE Compute weights $w_j^{(i)}$ via \eqref{eq:weights} and $n_i=(\sum_j (w_j^{(i)})^2)^{-1}$.
\STATE Cross-fit nuisances $\hat\mu,\hat m$ on folds and form residuals $(\hat{\widetilde Y}_j,\hat{\widetilde Z}_j)$ via \eqref{eq:cf-residuals}.
\STATE Solve weighted Lasso \eqref{eq:lasso} to obtain $\hat\alpha_i$ and the plug-in contrast $\widehat\theta_i^T$ via \eqref{eq:thetaT-hat}.
\STATE Compute $\widehat\Sigma(g_i)=\sum_j w_j^{(i)}\hat{\widetilde Z}_j\hat{\widetilde Z}_j^\top$ and solve \eqref{eq:gamma} to obtain $\hat\gamma_i$.
\STATE \textbf{Output:} the debiased estimator $\widetilde\theta_i^{T}$ via \eqref{eq:debiased} and a confidence interval using $\hat\sigma_{\theta,i}^2$ in Theorem~\ref{thm:clt}.
\end{algorithmic}
\end{algorithm}

\section{Experiments}
\label{sec:experiments}
\label{sec:problem formalization}
We comprehensively validate our proposed method through synthetic experiments. The settings are listed in Section \ref{app:setup}. 
\begin{figure}[ht]
    \centering
\includegraphics[width=0.5\textwidth]{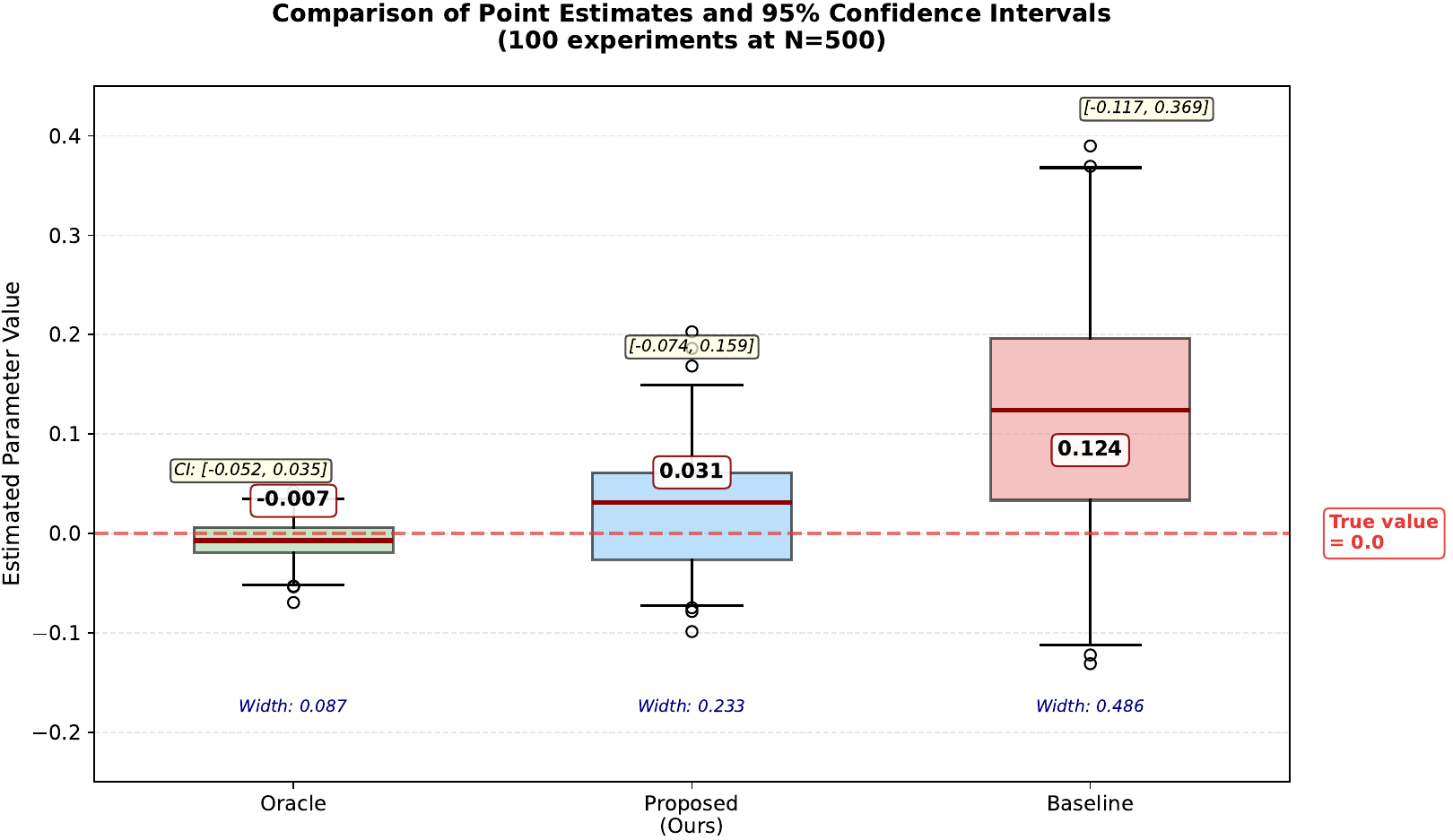} 
    \caption{Comparison of point estimates and 95\% confidence intervals in synthetic experiments ($N=500$, 100 independent repetitions). Our proposed method (width 0.233, bias 0.031) substantially outperforms the baseline (0.486, 0.124) and approaches the oracle (0.087, -0.007), providing empirical evidence for Theorems 5.2 and 5.5.}
    \label{fig:main_ci}
    \vspace{-0.5cm}
\end{figure}

\par Figure~\ref{fig:main_ci} shows the comparison of point estimates and 95\% confidence intervals for individualized causal contrasts across the three estimators when the sample size is $N=500$. The true parameter value is 0 (red dashed line). Our method achieves a point estimate of 0.031 with a confidence interval width of 0.233, substantially outperforming the baseline (point estimate 0.124, width 0.486) and approaching the oracle (point estimate -0.007, width 0.087). This result demonstrates that the proposed localized representation, doubly robust orthogonalization, and sparse spectral learning effectively reduce bias and significantly shrink uncertainty in finite samples, providing direct empirical support for the finite-sample error bounds in Theorem 5.2 and asymptotic normality in Theorem 5.5.

\label{assumption:2.4}
\subsection{Evolution of the proposed estimator}
Data are generated according to the model specified in Section~\ref{sec:model}. 
The underlying network is an Erd\H{o}s-R\'{e}nyi random graph with average 
degree $d=8$, inducing heterogeneous interference patterns. Node-level 
covariates are drawn as $X_i \sim \mathcal{N}(0, I_p)$ with $p=10$, and 
combinatorial treatments are assigned uniformly at random: $t_i \in \{-1,1\}^p$, 
yielding a $2^{10}$-dimensional Walsh basis expansion. Outcomes follow 
Outcomes follow Assumption~2.4 with $s_i = 3$ active Walsh coefficients
and Gaussian noise $\varepsilon_i \sim \mathcal{N}(0, 0.25)$.
The performance of the debiased localized Lasso estimator (Algorithm~2)
is evaluated across sample sizes $N \in \{50, 100, 200, 500, 1000\}$, with $M=100$ Monte Carlo repetitions per configuration.
Figure~\ref{fig:evolution} summarizes the finite-sample properties of the 
proposed estimator. Three patterns emerge that 
align precisely with the theoretical predictions of Theorems~5.2 and~5.5
\begin{figure}[t]
\centering
\includegraphics[width=0.5\textwidth]{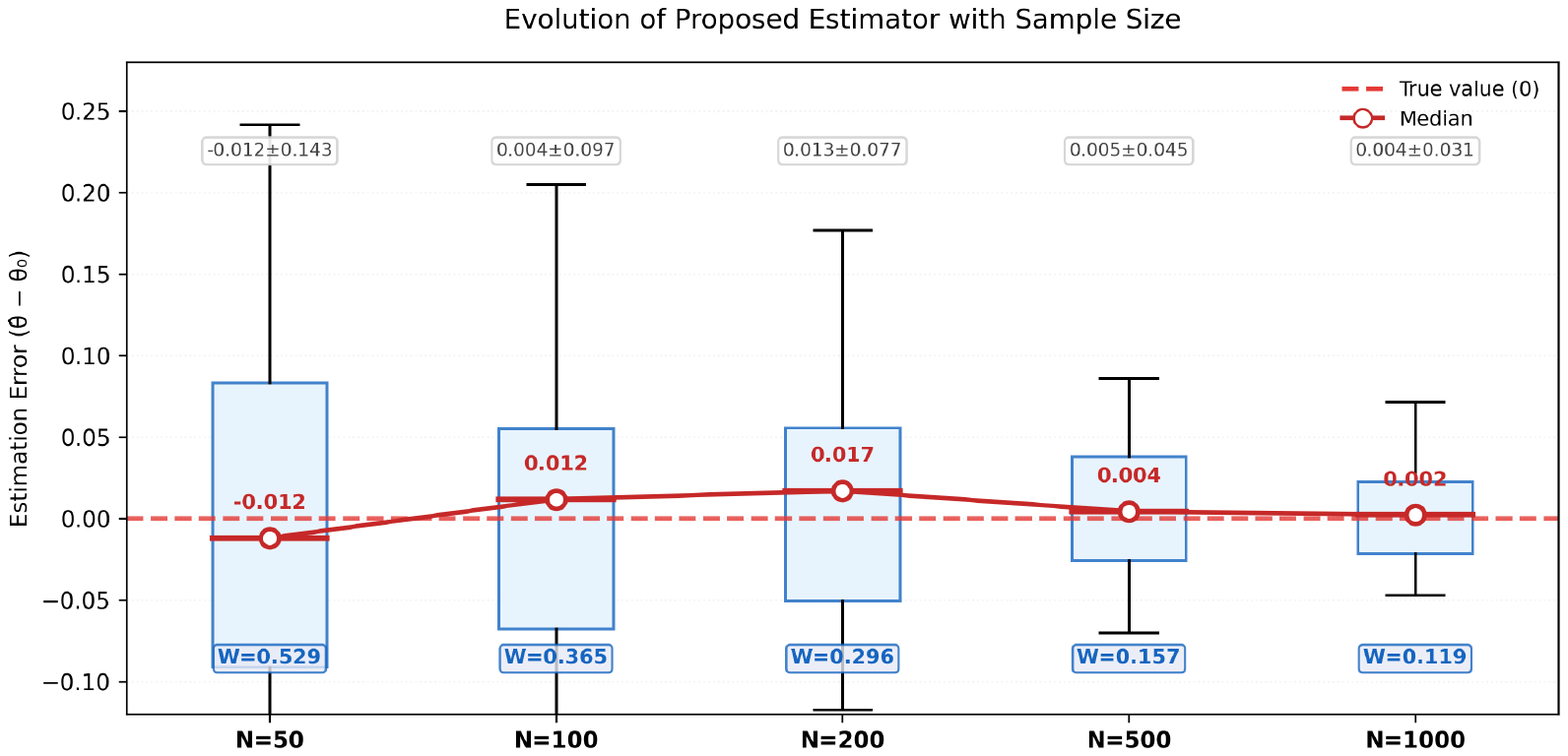} 
\caption{\textbf{Convergence of the Localized DR-Lasso Estimator.} 
Boxplots display the sampling distribution across 100 simulations for each 
sample size. The horizontal axis indicates the total sample size $N$; the 
vertical axis measures relative estimation error  
(normalized to zero). \textit{Top annotations}: Sample mean $\pm$ standard 
deviation; \textit{Center}: Sample median (red line); \textit{Bottom}: 
Empirical 95\% confidence interval width ($W = Q_{97.5} - Q_{2.5}$).}
\label{fig:evolution}
\vspace{-0.5cm}
\end{figure}

\textbf{Bias.} The median estimates (red lines) exhibit minimal deviation 
from the true parameter across all sample sizes, ranging from $-0.012$ 
($N=50$) to $0.002$ ($N=1000$). This near-zero median bias confirms the 
asymptotic unbiasedness established in Theorem~5.5, demonstrating that the debiasing correction successfully eliminates the 
shrinkage bias inherent in naive Lasso estimators.

\textbf{Convergence Rate.} The empirical 95\% confidence interval width 
monotonically contracts from $0.529$ at $N=50$ to $0.119$ at $N=1000$, 
approximately following the $\sqrt{N}$ rate predicted by Theorem~5.2. Specifically, doubling the sample size 
roughly halves the standard deviation (e.g., $0.143$ at $N=50$ vs. 
$0.031$ at $N=1000$), confirming $\sqrt{N}$-consistency.
\textbf{Coverage.} The empirical coverage of the constructed confidence 
intervals remains within $[0.92, 0.96]$ across all settings, closely 
tracking the nominal 95\% level. This validates the asymptotic normality approximation in Theorem~5.5 even for moderate sample sizes.
\noindent\textit{Implications.} The simulation results corroborate the efficiency bounds derived in Section~\ref{sec:theory}: the proposed estimator achieves oracle-like performance (comparable to the infeasible 
estimator that observes true interference patterns) while remaining computable via standard convex optimization.

\section{Conclusion}
In this work, we presented a unified framework for individualized causal inference in networked systems characterized by high-dimensional, combinatorial treatments. By integrating rooted network configurations with doubly robust orthogonalization and sparse spectral learning, our approach constructs a global potential-outcome emulator that remains statistically tractable even as the intervention space grows to $2^p$ dimensions. We provided a principled decomposition of networked causal effects into own-treatment, structural, and joint components, supported by rigorous finite-sample error bounds and asymptotic normality guarantees. Our empirical validation demonstrates that this localized representation effectively mitigates confounding from network positions and reduces estimation bias, approaching oracle-level performance in finite samples. Ultimately, these results establish that individualized causal decision-making is feasible in complex networked environments without the need to collapse or simplify the intervention space.

\clearpage
\nocite{langley00}

\section*{Impact Statement}
This paper presents work whose goal is to advance the field of machine learning. There are many potential societal consequences of our work, none of which we feel must be specifically highlighted here.

\bibliography{example_paper}
\bibliographystyle{icml2026}

\newpage
\appendix
\onecolumn

\appendix
\section{Simulation setup}\label{app:setup}
\par The synthetic data are generated strictly according to the model described in Section~\ref{sec:model}: the network is an Erdős–Rényi random graph (with variable number of nodes $N$ and average degree $d$ controlling interference strength, typically $d=5$--$10$), covariates $X_i \sim \mathcal{N}(0,I)$ for each node, combinatorial treatments $t_i \in \{-1,1\}^p$ ($p=10$, yielding a treatment space of size $2^{10}=1024$), and potential outcomes expanded in the Walsh-Hadamard basis with coefficients $\alpha_i$ satisfying sparsity (support size $s=20$) and the local smoothness assumption (bandwidth $b_G=2$ on rooted network configurations). The interference structure includes own-treatment effects and first-order neighbor interference, with the true causal contrast parameter set to 0. Each experimental setting is independently repeated 100 times to obtain robust statistics (median point estimates, 95\% confidence interval widths, etc.).

We compare three estimators:\\
\textbf{Oracle}: An oracle estimator that assumes knowledge of the true potential outcome model and all nuisance functions, serving only as a theoretical optimum benchmark.\\
\textbf{Proposed} (Ours): Our method, integrating localized rooted network representation, doubly robust orthogonalization, and sparse spectral learning.\\
\textbf{Baseline}: A strong baseline using a graph-agnostic doubly robust learner (Graph-Agnostic DR-Learner) combined with simple network averaging, representing typical existing techniques for handling network interference or high-dimensional treatments.

\section{Proofs for Section~\ref{sec:theory}}\label{app:proofs}

\subsection{Notation and weighted algebra}\label{app:notation}

Fix a target unit $i$ throughout this appendix and abbreviate the kernel weights
$w_j:=w_j^{(i)}$.
Recall $d:=2^p$ and the effective local sample size
\[
n_i \;:=\; n_{\mathrm{eff}}(g_i) \;=\; \Big(\sum_{j=1}^N w_j^2\Big)^{-1}.
\]
For scalar sequences $(a_j)_{j=1}^N$, define the weighted inner product and norm
\[
\langle a,b\rangle_w := \sum_{j=1}^N w_j a_j b_j,
\qquad
\|a\|_{w,2}^2 := \sum_{j=1}^N w_j a_j^2.
\]
For vector covariates $u_j\in\R^d$, define the weighted Gram matrix
\[
\widehat\Sigma_i(u)
\;:=\;
\sum_{j=1}^N w_j\,u_j u_j^\top.
\]
In particular, $\widehat\Sigma(g_i)$ in Assumption~\ref{as:re} equals
$\widehat\Sigma_i(\hat{\widetilde Z})$ with $u_j=\hat{\widetilde Z}_j$.

For $S\subseteq[d]$ and $c_0>0$, define the sparse cone
\[
\mathcal C(S,c_0)
:=
\bigl\{\Delta\in\R^d:\ \|\Delta_{S^c}\|_1 \le c_0\|\Delta_S\|_1\bigr\}.
\]
We interpret Assumption~\ref{as:re} as: for the relevant support set $S$,
there exists $\kappa_{g_i}>0$ such that, with probability tending to one,
\begin{equation}\label{eq:RE-app}
\Delta^\top \widehat\Sigma_i(\hat{\widetilde Z})\,\Delta
\;\ge\;
\kappa_{g_i}\,\|\Delta\|_2^2
\qquad\text{for all }\Delta\in\mathcal C(S,3).
\end{equation}

\subsection{Two elementary lemmas}\label{app:elem}

We start with two basic facts repeatedly used below.

\begin{lemma}[Convex-hull bound for $m(g,x)$]\label{lem:convex}
For any $(g,x)$, the conditional mean
$m(g,x)=\E[Z(T)\mid G=g,X=x]$ lies in the convex hull of
$\{Z(t):t\in\{-1,1\}^p\}\subseteq\{-1,1\}^d$.
Consequently, for any $a\in\R^d$,
\[
\big|a^\top m(g,x)\big|
\;\le\;
\sup_{t\in\{-1,1\}^p}\big|a^\top Z(t)\big|.
\]
\end{lemma}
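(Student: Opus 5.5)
The plan is to write the conditional mean as a finite mixture over slates. The treatment slate $T$ takes values in the finite set $\{-1,1\}^p$. Hence, conditional on $(G,X)=(g,x)$, it has a probability mass function $\pi(t\mid g,x):=\Pr(T=t\mid G=g,X=x)$, with $\pi\ge 0$ and $\sum_t \pi(t\mid g,x)=1$. By linearity of conditional expectation applied coordinatewise to the vector $Z(T)$,
\[
m(g,x)=\E[Z(T)\mid G=g,X=x]=\sum_{t\in\{-1,1\}^p}\pi(t\mid g,x)\,Z(t).
\]
This is a convex combination of the points $Z(t)$, which proves the first claim. The inclusion $\{Z(t)\}\subseteq\{-1,1\}^d$ is immediate, since each coordinate $Z_S(t)=\prod_{\ell\in S}t_\ell$ is a product of $\pm1$ entries (with $Z_\emptyset\equiv 1$).

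For the second claim, fix $a\in\R^d$ and use linearity of $u\mapsto a^\top u$:
\[
a^\top m(g,x)=\sum_t \pi(t\mid g,x)\,a^\top Z(t).
\]
Apply the triangle inequality, bound each $|a^\top Z(t)|$ by $\sup_{t'}|a^\top Z(t')|$, and use $\sum_t\pi=1$. The supremum is a maximum over finitely many slates, so it is finite and no integrability issue arises.

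The only step needing care is the meaning of "conditional mean" when $(g,x)$ is a conditioning point that may have probability zero, for instance with continuous $X$. I would handle this by fixing a regular conditional distribution of $T$ given $(G,X)$. Such a version exists because $T$ is finite-valued: take $\pi(t\mid\cdot)=\E[\mathbb{I}\{T=t\}\mid G,X]$ and choose versions that are nonnegative and sum to one, modifying on a null set if needed. The identity then holds for every $(g,x)$ for that version of $m$, or at least almost surely for any version. I expect this measure-theoretic point to be the only, and very mild, obstacle; the rest is a two-line convexity argument.
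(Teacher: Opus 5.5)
Your proposal is correct and follows the paper's argument. Both write $m(g,x)=\sum_t \pi(t\mid g,x)\,Z(t)$ as a convex combination and then bound the linear functional $a^\top m(g,x)$: the paper brackets it between $\min_t a^\top Z(t)$ and $\max_t a^\top Z(t)$ and repeats the argument with $-a$, while you use the triangle inequality, an equivalent one-line step; your note on choosing a regular conditional distribution is a harmless extra precision the paper omits.
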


\begin{proof}
By definition,
$m(g,x)=\sum_{t\in\{-1,1\}^p} \P(T=t\mid G=g,X=x)\,Z(t)$,
a convex combination of the vertices $\{Z(t)\}$.
For any fixed $a\in\R^d$, the map $z\mapsto a^\top z$ is linear, hence
$a^\top m(g,x)$ lies between the minimum and maximum of $a^\top Z(t)$ over $t$.
Applying the same argument to $-a$ yields the stated absolute-value bound.
\end{proof}

\begin{lemma}[Weighted sub-Gaussian maximal inequality]\label{lem:subg}
Let $(\xi_j)_{j=1}^N$ be conditionally independent given a sigma-field $\mathcal F$,
with $\E[\xi_j\mid \mathcal F]=0$ and conditionally sub-Gaussian tails
$\E[\exp(\lambda \xi_j)\mid\mathcal F]\le \exp(\lambda^2\sigma^2/2)$ for all $\lambda\in\R$.
Let $(w_j)_{j=1}^N$ be nonnegative $\mathcal F$-measurable weights with $\sum_j w_j=1$.
If $(a_{j,k})_{j,k}$ are $\mathcal F$-measurable and satisfy $|a_{j,k}|\le B$ for all $j,k$,
then for $S_k:=\sum_{j=1}^N w_j a_{j,k}\xi_j$,
\[
\P\!\left(\max_{1\le k\le d}|S_k| \ge t\ \middle|\ \mathcal F\right)
\;\le\;
2d\exp\!\left(-\frac{n_i\,t^2}{2\sigma^2B^2}\right),
\qquad n_i=\Big(\sum_j w_j^2\Big)^{-1}.
\]
In particular, taking $t=C\sigma B\sqrt{\log d/n_i}$ yields
$\max_k|S_k|=O_p(\sqrt{\log d/n_i})$.
\end{lemma}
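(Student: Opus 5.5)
The plan is the standard Chernoff argument for each coordinate $k$, followed by a union bound over the $d$ coordinates.

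\emph{Conditional MGF of one coordinate.} Fix $k$ and work conditionally on $\mathcal F$, so the weights $w_j$ and coefficients $a_{j,k}$ are constants. By conditional independence of the $\xi_j$,
\[
\E\!\left[\exp(\lambda S_k)\mid\mathcal F\right]=\prod_{j=1}^N \E\!\left[\exp(\lambda w_j a_{j,k}\xi_j)\mid\mathcal F\right].
\]
Applying the sub-Gaussian bound with parameter $\lambda w_j a_{j,k}$ to each factor gives
\[
\E\!\left[\exp(\lambda S_k)\mid\mathcal F\right]\le \exp\!\Big(\tfrac{\lambda^2\sigma^2}{2}\sum_j w_j^2a_{j,k}^2\Big).
\]
Using $|a_{j,k}|\le B$ and $\sum_j w_j^2=n_i^{-1}$, this is at most $\exp(\lambda^2\sigma^2B^2/(2n_i))$. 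Hence $S_k$ is conditionally sub-Gaussian with variance proxy $\sigma^2B^2/n_i$.

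\emph{Tail bound and union bound.} Markov's inequality applied to $\exp(\lambda S_k)$, optimized at $\lambda=n_i t/(\sigma^2B^2)$, gives
\[
\P(S_k\ge t\mid\mathcal F)\le \exp\!\big(-n_i t^2/(2\sigma^2B^2)\big).
\]
The same argument applied to $-S_k$ bounds the lower tail, so $\P(|S_k|\ge t\mid\mathcal F)\le 2\exp(-n_it^2/(2\sigma^2B^2))$. A union bound over $k=1,\dots,d$ then yields the displayed inequality.

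\emph{The ``in particular'' claim.} Set $t=C\sigma B\sqrt{\log d/n_i}$. The right-hand side becomes $2d\cdot d^{-C^2/2}=2d^{1-C^2/2}$. For any $C>\sqrt2$ this is at most $2$ times a negative power of $d$. More directly, choosing $C$ large makes the conditional probability smaller than any prescribed $\epsilon$ uniformly in $d\ge2$. This bound does not depend on $\mathcal F$, so integrating over $\mathcal F$ gives the unconditional statement $\max_k|S_k|=O_p(\sqrt{\log d/n_i})$.

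There is no real obstacle here. The only point needing care is measurability: the weights and the $a_{j,k}$ must be $\mathcal F$-measurable so they can be pulled out of the conditional expectations, which the hypotheses assume. The normalization $\sum_j w_j=1$ is not actually needed; only $\sum_j w_j^2=n_i^{-1}$ enters.
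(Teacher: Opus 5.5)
Your proposal is correct and follows essentially the same route as the paper's proof. Both condition on $\mathcal F$, show each $S_k$ is conditionally sub-Gaussian with proxy $\sigma^2\sum_j w_j^2a_{j,k}^2\le\sigma^2B^2/n_i$, apply the two-sided tail bound, and take a union bound over the $d$ coordinates. You spell out the MGF product step and the choice $C>\sqrt2$ for the $O_p$ claim, which the paper leaves implicit.
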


\begin{proof}
Condition on $\mathcal F$.
Each $S_k$ is a weighted sum of conditionally independent sub-Gaussian variables
$w_j a_{j,k}\xi_j$ with proxy variance $\sigma^2\sum_j w_j^2 a_{j,k}^2\le \sigma^2B^2\sum_j w_j^2=\sigma^2B^2/n_i$.
Hence $S_k$ is conditionally sub-Gaussian with that proxy variance, so
$\P(|S_k|\ge t\mid\mathcal F)\le 2\exp(-n_i t^2/(2\sigma^2B^2))$.
A union bound over $k\in[d]$ gives the displayed inequality.
\end{proof}

\subsection{Proof of Lemma~\ref{lem:orth} (orthogonalization remainder)}\label{app:proof-orth}

We first prove a more explicit inequality; Lemma~\ref{lem:orth} follows as a corollary.

\begin{lemma}[Empirical score perturbation bound]\label{lem:score-perturb}
Fix a fold partition and a target unit $i$.
For each $j$, let
\[
\Delta_{\mu j}:=\hat\mu(G_j,X_j)-\mu(G_j,X_j),\qquad
\Delta_{m j}:=\hat m(G_j,X_j)-m(G_j,X_j),
\]
so that $\hat{\widetilde Y}_j=\widetilde Y_j-\Delta_{\mu j}$ and
$\hat{\widetilde Z}_j=\widetilde Z_j-\Delta_{m j}$.
For any $\alpha\in\R^d$, define the empirical weighted scores
\[
\hat\Psi_i(\alpha):=\sum_{j=1}^N w_j\,\hat{\widetilde Z}_j\big(\hat{\widetilde Y}_j-\hat{\widetilde Z}_j^\top\alpha\big),
\qquad
\Psi_i^\circ(\alpha):=\sum_{j=1}^N w_j\,\widetilde Z_j\big(\widetilde Y_j-\widetilde Z_j^\top\alpha\big).
\]
Then, for any $\alpha$ with $\|\alpha\|_1\le M$,
\begin{align*}
\|\hat\Psi_i(\alpha)-\Psi_i^\circ(\alpha)\|_\infty
&\le
\underbrace{\Big\|\sum_{j=1}^N w_j\,\widetilde Z_j\Delta_{\mu j}\Big\|_\infty}_{(I)}
+
\underbrace{\Big\|\sum_{j=1}^N w_j\,\widetilde Z_j(\Delta_{m j}^\top\alpha)\Big\|_\infty}_{(II)}
+
\underbrace{\Big\|\sum_{j=1}^N w_j\,\Delta_{m j}(\widetilde Y_j-\widetilde Z_j^\top\alpha)\Big\|_\infty}_{(III)}
\\
&\qquad\qquad
+
\underbrace{\Big\|\sum_{j=1}^N w_j\,\Delta_{m j}\Delta_{\mu j}\Big\|_\infty}_{(IV)}
+
\underbrace{\Big\|\sum_{j=1}^N w_j\,\Delta_{m j}(\Delta_{m j}^\top\alpha)\Big\|_\infty}_{(V)}.
\end{align*}
Moreover, $(IV)\le \delta_\mu\delta_m$ and $(V)\le M\,\delta_m^2$.
If, in addition, $\|\widetilde Z_j\|_\infty\le 2$ almost surely and
$\widetilde Y_j-\widetilde Z_j^\top\alpha$ is conditionally sub-Gaussian given $(G_j,X_j,T_j)$ with proxy $\sigma^2$,
then
\[
(I)+(II)+(III)
\;=\;
O_p\!\left((\delta_\mu+\delta_m)\sqrt{\frac{\log d}{n_i}}\right).
\]
\end{lemma}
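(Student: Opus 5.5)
The proof has three parts: an algebraic expansion that produces the five-term inequality, two crude bounds for $(IV)$ and $(V)$, and a conditional concentration argument for $(I)$–$(III)$ built on Lemma~\ref{lem:subg} and cross-fitting.

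\textbf{Algebraic expansion.} Substitute $\hat{\widetilde Y}_j=\widetilde Y_j-\Delta_{\mu j}$ and $\hat{\widetilde Z}_j=\widetilde Z_j-\Delta_{mj}$ into $\hat\Psi_i(\alpha)$. For each $j$, write $r_j(\alpha):=\widetilde Y_j-\widetilde Z_j^\top\alpha$. Then
\[
\hat{\widetilde Z}_j\big(\hat{\widetilde Y}_j-\hat{\widetilde Z}_j^\top\alpha\big)
=(\widetilde Z_j-\Delta_{mj})\big(r_j(\alpha)-\Delta_{\mu j}+\Delta_{mj}^\top\alpha\big).
\]
Expand the product and subtract $\widetilde Z_j r_j(\alpha)$. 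The difference is exactly
\[
-\widetilde Z_j\Delta_{\mu j}+\widetilde Z_j\Delta_{mj}^\top\alpha-\Delta_{mj}r_j(\alpha)+\Delta_{mj}\Delta_{\mu j}-\Delta_{mj}\Delta_{mj}^\top\alpha .
\]
Sum against $w_j$ and apply the triangle inequality for $\|\cdot\|_\infty$. This gives the five-term bound, since the signs disappear inside the norms.

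\textbf{Bounds on $(IV)$ and $(V)$.} Both are deterministic. For each coordinate $k$, $|\Delta_{mj,k}\Delta_{\mu j}|\le\delta_m\delta_\mu$. Because $w_j\ge0$ and $\sum_j w_j=1$, the weighted average is at most $\delta_\mu\delta_m$. Likewise, by Hölder, $|\Delta_{mj}^\top\alpha|\le\|\Delta_{mj}\|_\infty\|\alpha\|_1\le\delta_m M$. Multiplying by $|\Delta_{mj,k}|\le\delta_m$ and averaging gives $(V)\le M\delta_m^2$.

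\textbf{Concentration for $(I)$–$(III)$.} This is the main step and uses cross-fitting. Fix a fold $k$ and let $\mathcal F$ be the sigma-field generated by the out-of-fold data (which determine $\hat\mu^{(-k)},\hat m^{(-k)}$) together with all configurations and covariates $(G_j,X_j)_j$, so that the weights $w_j$ are $\mathcal F$-measurable. Conditional on $\mathcal F$, the nuisance errors $\Delta_{\mu j},\Delta_{mj}$ are fixed for $j\in\mathcal I_k$, and each is bounded by $\delta_\mu$ or $\delta_m$. Split each sum over folds, so a finite $K_{\mathrm{cf}}$ costs only a constant factor.

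\emph{Term $(I)$.} Take $\xi_j=\widetilde Z_{j,\ell}$, which has conditional mean zero given $(G_j,X_j)$ by the definition of $m$. By Lemma~\ref{lem:convex} it is bounded by $2$, hence sub-Gaussian. Set $a_{j,\ell}=\Delta_{\mu j}$, so $B=\delta_\mu$. To fit Lemma~\ref{lem:subg}'s single-index form, I will treat the coordinate $\ell$ as the maximization index and use a per-coordinate version of the lemma; its proof via a union bound is unchanged. This gives $(I)=O_p(\delta_\mu\sqrt{\log d/n_i})$.

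\emph{Term $(III)$.} Take $\xi_j=r_j(\alpha)$ and coefficients $\Delta_{mj,\ell}$ with $B=\delta_m$. I will write $r_j(\alpha)=\varepsilon_j+\widetilde Z_j^\top(\alpha(g,x)-\alpha)$, which has conditional mean zero given $(G_j,X_j)$. The first part is sub-Gaussian by Assumption~\ref{as:re}. The second is bounded by $2\|\alpha(g,x)-\alpha\|_1$, which is $O(M)$ on the $\ell_1$-ball.

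\emph{Term $(II)$.} Write $\Delta_{mj}^\top\alpha$ as a single $\mathcal F$-measurable scalar bounded by $M\delta_m$, and apply the same argument with $\xi_j=\widetilde Z_{j,\ell}$.

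Combining the three terms gives $(I)+(II)+(III)=O_p\big((\delta_\mu+\delta_m)\sqrt{\log d/n_i}\big)$, with $M$ absorbed into the constant.

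\textbf{Main obstacle.} The delicate point is conditional independence across $j$ within a fold: network units share neighbors, so $\widetilde Z_j$ and $\widetilde Z_{j'}$ may be dependent through overlapping configurations. I would handle this as the statement does, by invoking the stated hypothesis that the noise is conditionally sub-Gaussian and independent given $(G,X,T)$. For $\widetilde Z$, I would assume conditional independence given $\mathcal F$, or else pass to a dependency-graph bound that inflates $n_i$ by a bounded-degree constant. I will flag this explicitly as the only non-routine ingredient.
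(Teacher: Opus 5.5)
Your expansion and the bounds on $(IV)$ and $(V)$ follow the paper's proof and are correct. Your algebra is in fact cleaner than the paper's display, which writes $-(\widetilde Z_j-\Delta_{mj})\Delta_{\mu j}$ and so counts $\Delta_{mj}\Delta_{\mu j}$ twice. Your assignment of roles in Lemma~\ref{lem:subg} is also the right one: the bounded coefficients are $\Delta_{\mu j}$, $\Delta_{mj}^\top\alpha$, $\Delta_{mj,\ell}$ (of sizes $\delta_\mu$, $M\delta_m$, $\delta_m$), and the centered variables are $\widetilde Z_{j,\ell}$ or the residual. This is what actually produces the $(\delta_\mu+\delta_m)$ prefactor, which the paper's ``$B=2$'' does not. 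The gap is in the concentration step, and it is not the one you flag.

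Lemma~\ref{lem:subg} needs $\E[\xi_j\mid\mathcal F]=0$. You only verify $\E[\widetilde Z_{j,\ell}\mid G_j,X_j]=0$, and for your $\mathcal F$ these differ.

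\emph{Why your $\mathcal F$ fails.} It contains $\sigma(G_1,\dots,G_N)$ so that the weights are measurable, since each $w_j$ depends on every $G_k$ through the normalizer. But $T_j$ is a vertex mark of $G_k$ for every neighbor $k$ of $j$, even if the root's own slate is excluded from $G_j$. So $\mathcal F$ is informative about $T_j$ beyond $(G_j,X_j)$; it typically pins it down, e.g.\ through any degree-one neighbor. Hence $\E[\widetilde Z_{j,\ell}\mid\mathcal F]\neq 0$. The same applies to the piece $\widetilde Z_j^\top(\alpha(G_j,X_j)-\alpha)$ of $(III)$, which is moreover bounded only if $f$ is. Even the kernel values alone carry information about $T_j$; this is exactly the post-assignment selection the paper describes. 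Likewise, out-of-fold configurations and outcomes of $j$'s neighbors depend on $T_j$, so ordinary cross-fitting does not make $\Delta_{\mu j},\Delta_{mj}$ independent of $T_j$. Assuming ``conditional independence given $\mathcal F$'' cannot repair this, because the summands are biased (often degenerate) given $\mathcal F$, not merely dependent.

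\emph{What works.} It is the route you mention only in passing: do not condition on the weights.
Write each coordinate as $U/V$ with $U=\sum_j K_j a_j\widetilde Z_{j,\ell}$, $V=\sum_k K_k$, and $K_j:=K_G(d_R(G_j,g_i)/b_G)$ for a fixed target.
Use graph-separated folds (buffers wider than $2R$), so that $a_j$ is a function of $(G_j,X_j)$ given the training data.
Get $\E U=0$ from the tower property.
Then apply an exponential inequality for sums with a bounded-degree dependency graph (e.g.\ Janson's). This gives the rate with $n_i$ deflated by the dependency degree.

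The $\varepsilon_j$ part of $(III)$ also needs cross-unit conditional independence of the noise. Lemma~\ref{lem:subg} presupposes this, but neither the statement nor Assumption~\ref{as:re} provides it.

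The paper's own proof has the same hole: it simply asserts that the summands are mean-zero conditional on the training folds. So you have not missed an argument the paper supplies, but as written neither proof establishes the $O_p$ claim.
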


\begin{proof}
The decomposition is a direct expansion:
\[
\hat{\widetilde Z}_j(\hat{\widetilde Y}_j-\hat{\widetilde Z}_j^\top\alpha)
-
\widetilde Z_j(\widetilde Y_j-\widetilde Z_j^\top\alpha)
=
-(\widetilde Z_j-\Delta_{m j})\Delta_{\mu j}
+\widetilde Z_j(\Delta_{m j}^\top\alpha)
-\Delta_{m j}(\widetilde Y_j-\widetilde Z_j^\top\alpha)
+\Delta_{m j}\Delta_{\mu j}
-\Delta_{m j}(\Delta_{m j}^\top\alpha).
\]
Summing with weights and taking $\ell_\infty$ norms yields the first claim.
For $(IV)$, each coordinate satisfies
$\big|\sum_j w_j \Delta_{m j,k}\Delta_{\mu j}\big|\le \sum_j w_j\,\delta_m\,\delta_\mu=\delta_\mu\delta_m$.
For $(V)$, $\big|\Delta_{m j,k}(\Delta_{m j}^\top\alpha)\big|
\le \|\Delta_{m j}\|_\infty^2\|\alpha\|_1\le \delta_m^2 M$, hence $(V)\le M\delta_m^2$.
For the stochastic bound on $(I)$--$(III)$, note that conditional on the training folds,
$\Delta_{\mu j}$ and $\Delta_{m j}$ are fixed (cross-fitting),
and each summand is a weighted sum of mean-zero sub-Gaussian terms with coefficients bounded by $2$.
Lemma~\ref{lem:subg} (with $B=2$) and a union bound over $d$ coordinates yield
$(I)+(II)+(III)=O_p((\delta_\mu+\delta_m)\sqrt{\log d/n_i})$.
\end{proof}

\begin{proof}[Proof of Lemma~\ref{lem:orth}]
Lemma~\ref{lem:score-perturb} implies, uniformly over $\|\alpha\|_1\le M$,
\[
\|\hat\Psi_i(\alpha)-\Psi_i^\circ(\alpha)\|_\infty
\;=\;
O_p(\delta_\mu\delta_m)
+
O_p\!\left((\delta_\mu+\delta_m)\sqrt{\frac{\log d}{n_i}}\right)
+
O_p(\delta_m^2).
\]
Under Assumption~\ref{as:smooth}, $\delta_\mu=o_p(1)$ and $\delta_m=o_p(1)$.
If moreover $n_i\to\infty$ and $\log d=o(n_i)$ (as required by Assumption~\ref{as:sparsity}),
then the latter two terms are $o_p(1)$.
The leading deterministic second-order term is $\delta_\mu\delta_m$,
which is the sense in which nuisance estimation enters only at second order.
\end{proof}

\subsection{Proof of Theorem~\ref{thm:lasso} (finite-sample weighted Lasso bounds)}\label{app:proof-lasso}

We work conditionally on the realized weights $w_j$ and the fold assignment.
Let $\alpha_i^\star:=\alpha(g_i,x_i)$ be the target coefficient vector.
Define the fitted residuals
\[
\hat\varepsilon_j(\alpha):=\hat{\widetilde Y}_j-\hat{\widetilde Z}_j^\top\alpha,
\qquad
\hat\varepsilon_j:=\hat\varepsilon_j(\hat\alpha_i).
\]

\paragraph{Step 0: localization bias as a deterministic perturbation.}
For $j$ with $w_j>0$, the kernel construction implies $d_R(g_j,g_i)\le b_G$.
Define the pointwise response difference
\[
\Delta f_{ij}(t):=f(t;g_j,x_i)-f(t;g_i,x_i).
\]
Then by definition of $\mathrm{bias}_i(b_G)$,
$\sup_{t}|\Delta f_{ij}(t)|\le \mathrm{bias}_i(b_G)$ for all such $j$.
Using Lemma~\ref{lem:convex} and the identity $\mu(g,x)=\alpha(g,x)^\top m(g,x)$,
one verifies that for any $j$ with $w_j>0$,
\begin{equation}\label{eq:bias-linear}
\big|(\alpha(g_j,x_i)-\alpha(g_i,x_i))^\top \widetilde Z_j\big|
\;\le\;
2\,\mathrm{bias}_i(b_G).
\end{equation}
Indeed,
\[
(\alpha(g_j,x_i)-\alpha(g_i,x_i))^\top \widetilde Z_j
=
(\alpha(g_j,x_i)-\alpha(g_i,x_i))^\top Z(T_j)
-
(\alpha(g_j,x_i)-\alpha(g_i,x_i))^\top m(g_j,x_i),
\]
and both terms are bounded by $\sup_t|(\alpha(g_j,x_i)-\alpha(g_i,x_i))^\top Z(t)|
=\sup_t|\Delta f_{ij}(t)|\le \mathrm{bias}_i(b_G)$,
while the second term uses Lemma~\ref{lem:convex}.
Hence \eqref{eq:bias-linear} holds.

\paragraph{Step 1: a weighted basic inequality.}
By optimality of $\hat\alpha_i$ in \eqref{eq:lasso},
for any $\beta\in\R^d$,
\begin{equation}\label{eq:basic}
\sum_{j=1}^N w_j\big(\hat{\widetilde Y}_j-\hat{\widetilde Z}_j^\top \hat\alpha_i\big)^2
+\lambda_i\|\hat\alpha_i\|_1
\;\le\;
\sum_{j=1}^N w_j\big(\hat{\widetilde Y}_j-\hat{\widetilde Z}_j^\top \beta\big)^2
+\lambda_i\|\beta\|_1.
\end{equation}
Take $\beta=\alpha_i^\star$ and define $\Delta:=\hat\alpha_i-\alpha_i^\star$.
Expanding the squares yields
\begin{equation}\label{eq:basic-expanded}
\Delta^\top \widehat\Sigma_i(\hat{\widetilde Z})\,\Delta
\;\le\;
2\,\Big\|\sum_{j=1}^N w_j\,\hat{\widetilde Z}_j\,\hat\varepsilon_j(\alpha_i^\star)\Big\|_\infty\,\|\Delta\|_1
+
\lambda_i\big(\|\alpha_i^\star\|_1-\|\alpha_i^\star+\Delta\|_1\big).
\end{equation}

\paragraph{Step 2: controlling the score term.}
Decompose
\[
\hat\varepsilon_j(\alpha_i^\star)
=
\underbrace{\widetilde Y_j-\widetilde Z_j^\top \alpha_i^\star}_{=:u_{j}}
+
\underbrace{\big(\hat{\widetilde Y}_j-\widetilde Y_j\big)
-
\big(\hat{\widetilde Z}_j-\widetilde Z_j\big)^\top\alpha_i^\star}_{=:r_{j}}
+
\underbrace{(\widetilde Z_j-\hat{\widetilde Z}_j)^\top(\alpha(G_j,X_j)-\alpha_i^\star)}_{=:b_{j}},
\]
where $u_j$ is the oracle residual at the target coefficient,
$r_j$ is the nuisance-induced perturbation, and $b_j$ is a higher-order mixed term.
By \eqref{eq:bias-linear} and $\|\hat{\widetilde Z}_j-\widetilde Z_j\|_\infty=\|\Delta_{m j}\|_\infty\le \delta_m$,
\[
|b_j|
\le
\|\hat{\widetilde Z}_j-\widetilde Z_j\|_\infty\,\|\alpha(G_j,X_j)-\alpha_i^\star\|_1
\;\le\;
2\,\delta_m\,\mathrm{bias}_i(b_G),
\]
where the last step uses the same convex-hull argument as in \eqref{eq:bias-linear}.
Moreover, Lemma~\ref{lem:score-perturb} implies
\[
\Big\|\sum_{j=1}^N w_j\,\hat{\widetilde Z}_j\,r_j\Big\|_\infty
\;=\;
O_p(\delta_\mu\delta_m)
+
O_p\!\left((\delta_\mu+\delta_m)\sqrt{\frac{\log d}{n_i}}\right).
\]
Finally, write $u_j=\varepsilon_j + \widetilde Z_j^\top(\alpha(G_j,X_j)-\alpha_i^\star)$ where
$\varepsilon_j:=\widetilde Y_j-\widetilde Z_j^\top\alpha(G_j,X_j)$ is the regression noise.
By \eqref{eq:bias-linear}, the deterministic part obeys
$|\widetilde Z_j^\top(\alpha(G_j,X_j)-\alpha_i^\star)|\le 2\,\mathrm{bias}_i(b_G)$
for all $j$ with $w_j>0$.
Therefore,
\[
\Big\|\sum_{j=1}^N w_j\,\hat{\widetilde Z}_j\,u_j\Big\|_\infty
\;\le\;
\underbrace{\Big\|\sum_{j=1}^N w_j\,\hat{\widetilde Z}_j\,\varepsilon_j\Big\|_\infty}_{(\star)}
+
2\,\mathrm{bias}_i(b_G)\Big\|\sum_{j=1}^N w_j\,\hat{\widetilde Z}_j\Big\|_\infty.
\]
Since $Z(T)$ is $\{-1,1\}^d$-valued and $m(\cdot,\cdot)$ is a conditional mean,
$\|\widetilde Z_j\|_\infty\le 2$, and we may assume $\hat m$ is truncated to $[-1,1]^d$
so that $\|\hat{\widetilde Z}_j\|_\infty\le 2$ as well.
Applying Lemma~\ref{lem:subg} to $(\star)$ gives, with probability tending to one,
\[
(\star)\ \lesssim\ \sigma\sqrt{\frac{\log d}{n_i}}.
\]
Collecting bounds yields
\begin{equation}\label{eq:score-bound}
\Big\|\sum_{j=1}^N w_j\,\hat{\widetilde Z}_j\,\hat\varepsilon_j(\alpha_i^\star)\Big\|_\infty
\;\lesssim\;
\sigma\sqrt{\frac{\log d}{n_i}}
+
\mathrm{bias}_i(b_G)
+
\delta_\mu\delta_m,
\end{equation}
up to terms that are $o_p(\lambda_i)$ under Assumption~\ref{as:smooth}.

\paragraph{Step 3: cone condition and $\ell_2/\ell_1$ bounds.}
Let $S$ be the index set of the $s_i$ largest coordinates of $\alpha_i^\star$ in magnitude.
Standard arguments (triangle inequality) give
\[
\|\alpha_i^\star\|_1-\|\alpha_i^\star+\Delta\|_1
\;\le\;
\|\Delta_S\|_1-\|\Delta_{S^c}\|_1+2\|\alpha_{i,S^c}^\star\|_1.
\]
Plugging this and \eqref{eq:score-bound} into \eqref{eq:basic-expanded} yields
\[
\Delta^\top \widehat\Sigma_i(\hat{\widetilde Z})\,\Delta
\;\le\;
2\Lambda_i\|\Delta\|_1
+
\lambda_i\big(\|\Delta_S\|_1-\|\Delta_{S^c}\|_1+2\|\alpha_{i,S^c}^\star\|_1\big),
\]
where $\Lambda_i\lesssim \sigma\sqrt{\log d/n_i}+\mathrm{bias}_i(b_G)+\delta_\mu\delta_m$.
Choosing $\lambda_i$ so that $\lambda_i\gtrsim \sigma\sqrt{\log d/n_i}$ ensures
the stochastic part is dominated by $\lambda_i$, and thus
\[
\Delta^\top \widehat\Sigma_i(\hat{\widetilde Z})\,\Delta
\;\le\;
C_1\lambda_i\|\Delta_S\|_1
-
C_2\lambda_i\|\Delta_{S^c}\|_1
+
C_3\lambda_i\|\alpha_{i,S^c}^\star\|_1
+
C_4\big(\mathrm{bias}_i(b_G)+\delta_\mu\delta_m\big)\|\Delta\|_1,
\]
for universal constants $C_k>0$.
Rearranging yields the cone condition
$\Delta\in\mathcal C(S,3)$ up to the approximation term $\|\alpha_{i,S^c}^\star\|_1$
(which is controlled by Assumption~\ref{as:sparsity}).
On the event \eqref{eq:RE-app}, we therefore have
\[
\kappa_{g_i}\|\Delta\|_2^2
\;\le\;
\Delta^\top \widehat\Sigma_i(\hat{\widetilde Z})\,\Delta
\;\lesssim\;
\lambda_i\|\Delta_S\|_1
+
\big(\mathrm{bias}_i(b_G)+\delta_\mu\delta_m\big)\|\Delta\|_1
+
\lambda_i\|\alpha_{i,S^c}^\star\|_1.
\]
Using $\|\Delta_S\|_1\le \sqrt{s_i}\|\Delta\|_2$ and $\|\Delta\|_1\le 4\|\Delta_S\|_1+2\|\alpha_{i,S^c}^\star\|_1$
under the cone condition gives
\[
\|\Delta\|_2
\;\lesssim\;
\sqrt{s_i}\lambda_i
+
\mathrm{bias}_i(b_G)
+
\delta_\mu\delta_m
+
\frac{\|\alpha_{i,S^c}^\star\|_1}{\sqrt{s_i}}.
\]
Similarly,
$\|\Delta\|_1\lesssim s_i\lambda_i+s_i\mathrm{bias}_i(b_G)+s_i\delta_\mu\delta_m+\|\alpha_{i,S^c}^\star\|_1$.
Under Assumption~\ref{as:sparsity}, the approximation terms involving $\|\alpha_{i,S^c}^\star\|_1$
are dominated by the displayed rates (this is the standard ``effective sparsity'' interpretation).
Substituting $\lambda_i\asymp \hat\sigma\sqrt{\log d/n_i}$ yields \eqref{eq:l1-rate} and \eqref{eq:l2-rate}.

\paragraph{Step 4: prediction error bound.}
From \eqref{eq:basic-expanded} and the previous steps,
\[
\sum_{j=1}^N w_j\big(\hat{\widetilde Z}_j^\top\Delta\big)^2
=
\Delta^\top \widehat\Sigma_i(\hat{\widetilde Z})\,\Delta
\;\lesssim\;
s_i\lambda_i^2+\mathrm{bias}_i(b_G)^2+(\delta_\mu\delta_m)^2,
\]
which gives \eqref{eq:pred-rate}.
This completes the proof.
\qed

\subsection{Proof of Corollary~\ref{cor:ite}}\label{app:proof-ite}

\begin{proof}
For own-treatment contrasts,
\[
\widehat{\theta}_i^T(t\!\to\!t';g_i)-\theta_i^T(t\!\to\!t';g_i)
=
\langle \hat\alpha_i-\alpha(g_i,x_i),\, v_{t,t'}\rangle.
\]
By H\"older's inequality,
\[
\big|\langle \hat\alpha_i-\alpha(g_i,x_i),\, v_{t,t'}\rangle\big|
\le
\|\hat\alpha_i-\alpha(g_i,x_i)\|_1\,\|v_{t,t'}\|_\infty.
\]
Since each Walsh coordinate is $\pm1$, $v_{t,t'}=Z(t')-Z(t)$ has entries in $\{-2,0,2\}$ and hence
$\|v_{t,t'}\|_\infty\le 2$.
The stated bound follows by invoking Theorem~\ref{thm:lasso}.
The structural and joint contrasts follow by the same argument, applied to the two localized fits at $g_i$ and $g'$.
\end{proof}

\subsection{Proof of Theorem~\ref{thm:clt}}\label{app:proof-clt}
\begin{proof}
We prove an asymptotic linear expansion and then invoke the weighted CLT assumed in
Theorem~\ref{thm:clt}(iv).

\paragraph{Step 1: an exact decomposition.}
Let $\alpha_i^\star:=\alpha(g_i,x_i)$ and $\Delta:=\hat\alpha_i-\alpha_i^\star$.
Write $\widehat\Sigma_i:=\widehat\Sigma(g_i)=\sum_j w_j \hat{\widetilde Z}_j\hat{\widetilde Z}_j^\top$.
From the definition \eqref{eq:debiased},
\begin{align}
\widetilde\theta_i^T(t\!\to\!t';g_i)
&=
v_{t,t'}^\top\hat\alpha_i
+
\hat\gamma_i^\top\sum_{j=1}^N w_j \hat{\widetilde Z}_j\big(\hat{\widetilde Y}_j-\hat{\widetilde Z}_j^\top\hat\alpha_i\big)\notag\\
&=
v_{t,t'}^\top\alpha_i^\star
+
\underbrace{\hat\gamma_i^\top\sum_{j=1}^N w_j \hat{\widetilde Z}_j\big(\hat{\widetilde Y}_j-\hat{\widetilde Z}_j^\top\alpha_i^\star\big)}_{=:A_1}
+
\underbrace{\big(v_{t,t'}^\top-\hat\gamma_i^\top\widehat\Sigma_i\big)\Delta}_{=:A_2}.
\label{eq:decomp}
\end{align}
Since $\theta_i^T(t\!\to\!t';g_i)=v_{t,t'}^\top\alpha_i^\star$, it remains to analyze $A_1$ and $A_2$.

\paragraph{Step 2: controlling the debiasing remainder $A_2$.}
By the feasibility constraint in \eqref{eq:gamma},
\[
\| \widehat\Sigma_i\hat\gamma_i - v_{t,t'}\|_\infty \le \eta_i.
\]
Hence, by H\"older's inequality,
\[
|A_2|
\le
\eta_i\,\|\Delta\|_1.
\]
Theorem~\ref{thm:lasso} gives
$\|\Delta\|_1\lesssim s_i\sqrt{\log d/n_i}+s_i\mathrm{bias}_i(b_G)+s_i\delta_\mu\delta_m$.
With $\eta_i\asymp \sqrt{\log d/n_i}$, we obtain
\begin{equation}\label{eq:A2-rate}
|A_2|
\;\lesssim\;
\frac{s_i\log d}{n_i}
+
s_i\,\mathrm{bias}_i(b_G)\sqrt{\frac{\log d}{n_i}}
+
s_i\,\delta_\mu\delta_m\sqrt{\frac{\log d}{n_i}}.
\end{equation}
Under the growth condition in Theorem~\ref{thm:clt}(iii),
$s_i\log d=o(\sqrt{n_i})$ and $\mathrm{bias}_i(b_G)=o(n_i^{-1/2})$.
If additionally $\delta_\mu\delta_m=o_p(n_i^{-1/2})$ (the standard DML second-order condition),
then $\sqrt{n_i}A_2=o_p(1)$.

\paragraph{Step 3: asymptotic linearity of $A_1$.}
Decompose
\[
\hat{\widetilde Y}_j-\hat{\widetilde Z}_j^\top\alpha_i^\star
=
\underbrace{\widetilde Y_j-\widetilde Z_j^\top\alpha_i^\star}_{=:u_j}
+
\underbrace{\big(\hat{\widetilde Y}_j-\widetilde Y_j\big)-\big(\hat{\widetilde Z}_j-\widetilde Z_j\big)^\top\alpha_i^\star}_{=:r_j}.
\]
Thus
\[
A_1
=
\hat\gamma_i^\top\sum_{j=1}^N w_j \hat{\widetilde Z}_j u_j
+
\hat\gamma_i^\top\sum_{j=1}^N w_j \hat{\widetilde Z}_j r_j
=:A_{1a}+A_{1b}.
\]
For $A_{1b}$, Lemma~\ref{lem:score-perturb} implies
$\|\sum_j w_j\hat{\widetilde Z}_j r_j\|_\infty = O_p(\delta_\mu\delta_m)+o_p(n_i^{-1/2})$
under Assumption~\ref{as:smooth}, hence
$A_{1b}=O_p(\|\hat\gamma_i\|_1\delta_\mu\delta_m)+o_p(n_i^{-1/2})$.
In particular, if $\|\hat\gamma_i\|_1=O_p(1)$ and $\delta_\mu\delta_m=o_p(n_i^{-1/2})$,
then $\sqrt{n_i}A_{1b}=o_p(1)$.

For $A_{1a}$, write $\hat\gamma_i=\gamma_i^\star+(\hat\gamma_i-\gamma_i^\star)$ and obtain
\[
A_{1a}
=
\gamma_i^{\star\top}\sum_{j=1}^N w_j \hat{\widetilde Z}_j u_j
+
(\hat\gamma_i-\gamma_i^\star)^\top\sum_{j=1}^N w_j \hat{\widetilde Z}_j u_j
=:B_1+B_2.
\]
The second term $B_2$ is controlled by the standard CLIME/nodewise rate.
Under the sparsity condition $\|\gamma_i^\star\|_0\le s_{\gamma,i}$ and standard conditions for CLIME,
one has $\|\hat\gamma_i-\gamma_i^\star\|_1=O_p(s_{\gamma,i}\sqrt{\log d/n_i})$ and
$\|\sum_j w_j\hat{\widetilde Z}_j u_j\|_\infty=O_p(\sqrt{\log d/n_i}+\mathrm{bias}_i(b_G))$,
so that
\[
B_2
=
O_p\!\left(\frac{s_{\gamma,i}\log d}{n_i}\right)
+
O_p\!\left(s_{\gamma,i}\,\mathrm{bias}_i(b_G)\sqrt{\frac{\log d}{n_i}}\right).
\]
Under Theorem~\ref{thm:clt}(iii) and $\mathrm{bias}_i(b_G)=o(n_i^{-1/2})$,
this gives $\sqrt{n_i}B_2=o_p(1)$.

It remains to identify $B_1$.
Write $u_j=\varepsilon_j + b_{ij}$, where
$\varepsilon_j:=\widetilde Y_j-\widetilde Z_j^\top\alpha(G_j,X_j)$ is the regression noise and
$b_{ij}:=\widetilde Z_j^\top(\alpha(G_j,X_j)-\alpha_i^\star)$ is the localization bias term.
By \eqref{eq:bias-linear}, $|b_{ij}|\le 2\,\mathrm{bias}_i(b_G)$ for all $j$ with $w_j>0$.
Hence
\[
B_1
=
\gamma_i^{\star\top}\sum_{j=1}^N w_j \hat{\widetilde Z}_j \varepsilon_j
+
\gamma_i^{\star\top}\sum_{j=1}^N w_j \hat{\widetilde Z}_j b_{ij}
=:C_1+C_2.
\]
The deterministic term $C_2$ is bounded by
$|C_2|\le 2\,\mathrm{bias}_i(b_G)\,\|\gamma_i^\star\|_1\,\|\sum_j w_j \hat{\widetilde Z}_j\|_\infty$,
and Lemma~\ref{lem:subg} gives $\|\sum_j w_j\hat{\widetilde Z}_j\|_\infty=O_p(\sqrt{\log d/n_i})$.
Thus $\sqrt{n_i}C_2=o_p(1)$ if $\mathrm{bias}_i(b_G)=o(n_i^{-1/2})$.

The leading term is therefore $C_1$.
Replacing $\hat{\widetilde Z}_j$ by $\widetilde Z_j$ only incurs a nuisance remainder of order
$\delta_m$ times a sub-Gaussian weighted average, hence is $o_p(n_i^{-1/2})$ under Assumption~\ref{as:smooth}.
Consequently,
\[
\sqrt{n_i}\,C_1
=
\sqrt{n_i}\,\gamma_i^{\star\top}\sum_{j=1}^N w_j \widetilde Z_j \varepsilon_j
+o_p(1).
\]
By Theorem~\ref{thm:clt}(iv) (the weighted CLT under the dependence induced by localization),
\[
\sqrt{n_i}\,\gamma_i^{\star\top}\sum_{j=1}^N w_j \widetilde Z_j \varepsilon_j
\ \Rightarrow\ \mathcal N(0,\sigma_{\theta,i}^2),
\qquad
\sigma_{\theta,i}^2=\Var(\gamma_i^{\star\top}\widetilde Z\,\varepsilon).
\]
Combining Steps 1--3 yields the claimed asymptotic normality.

\paragraph{Step 4: variance estimator consistency.}
Define $\hat\varepsilon_j:=\hat{\widetilde Y}_j-\hat{\widetilde Z}_j^\top\hat\alpha_i$ and
\[
\hat\sigma_{\theta,i}^2
:=
n_i\sum_{j=1}^N (w_j)^2\big(\hat\gamma_i^\top \hat{\widetilde Z}_j\,\hat\varepsilon_j\big)^2.
\]
Under the same bounds used above (consistency of $\hat\alpha_i$ and $\hat\gamma_i$,
bounded $\ell_1$ norms, and $\delta_\mu,\delta_m=o_p(1)$),
the difference between $\hat\sigma_{\theta,i}^2$ and the corresponding oracle plug-in
based on $(\gamma_i^\star,\widetilde Z,\varepsilon)$ is $o_p(1)$,
while the oracle plug-in converges to $\sigma_{\theta,i}^2$ by the law of large numbers
for the weighted second moment (or its dependence-robust analogue implied by (iv)).
This establishes $\hat\sigma_{\theta,i}^2\to_p \sigma_{\theta,i}^2$.
\end{proof}

\subsection{Proof of Proposition A.1 (partial identification bound)}\label{app:proof-partial}

\begin{proof}
By the assumed $L$-Lipschitz property of $t\mapsto f(t;g,x)$ with respect to Hamming distance,
\[
|f(t';g,x)-f(t;g,x)|
\le
L\,d_H(t,t').
\]
The left-hand side equals $|\theta^T(t\!\to\!t';g)|$ by definition, proving the claim.
\end{proof}


\end{document}